\newtheorem{theorem}{Theorem}
\newtheorem{lemma}[theorem]{Lemma}
\newtheorem{corollary}[theorem]{Corollary}
\newtheorem{definition}[theorem]{Definition}
\newtheorem{obsv}[theorem]{Observation}
\newcommand{\calA}{\mathcal{A}}
\newcommand{\calB}{\mathcal{B}}
\newcommand{\calC}{\mathcal{C}}
\newcommand{\calD}{\mathcal{D}}
\newcommand{\calH}{\mathcal{H}}
\newcommand{\calO}{\mathcal{O}}
\newcommand{\calX}{\mathcal{X}}
\newcommand{\calY}{\mathcal{Y}}
\newcommand{\loss}[2]{L_{#1}({#2})}
\newcommand{\txtfrac}[2]{{#1}/{#2}}
\newcommand{\dist}{\operatorname{d}}
\newcommand{\distsq}{\operatorname{d^2}}
\newcommand{\euc}{\operatorname{EUC}}
\newcommand{\erm}{\operatorname{ERM}}
\newcommand{\fms}{\operatorname{FMS}}
\newcommand{\bin}{\operatorname{bin}}
\newcommand{\bb}{\operatorname{b}}
\newcommand{\avgdistsq}{\operatorname{\overline{\Delta}}}
\DeclareMathOperator*{\cost}{cost}
\DeclareMathOperator*{\error}{error}
\newcommand{\vcdim}{\mathop{\mathrm{VCdim}}\nolimits}   
\newcommand{\Ndim}{\mathop{\mathrm{Ndim}}\nolimits}
\newcommand{\poly}{\operatorname{poly}}
\newcommand{\E}{\mathop{\mathbb{E}}}
\DeclareMathOperator*{\argmin}{argmin}
\DeclareMathOperator*{\argmax}{argmax}
\newcommand{\R}{{\mathbb{R}}}
\newcommand\numberthis{\addtocounter{equation}{1}\tag{\theequation}}
\begin{document}

\title{Semi-Supervised Algorithms for 
     Approximately Optimal and Accurate Clustering}
\author{
  Buddhima Gamlath \\
  \texttt{buddhima.gamlath@epfl.ch}
    \and
  Sangxia Huang \\
  \texttt{huang.sangxia@gmail.com}
    \and
  Ola Svensson \\
  \texttt{ola.svensson@epfl.ch}
}\date{}

\maketitle


\begin{abstract}

We study $k$-means clustering in a semi-supervised setting. 
Given an oracle that returns whether two given points belong to the same
cluster in a fixed optimal clustering, we investigate the following question:
how many oracle queries are sufficient to efficiently recover a clustering 
that, with probability at least $(1 - \delta)$, simultaneously has a cost of 
at most $(1 + \epsilon)$ times the optimal cost and an accuracy  of at least 
$(1 - \epsilon)$?

We show how to achieve such a clustering on $n$ points with 
$O{((k^2 \log n) \cdot m{(Q, \epsilon^4, \delta / (k\log n))})}$ 
oracle queries, when the $k$ clusters can be learned with an $\epsilon'$ 
error and a failure probability $\delta'$ using $m(Q, \epsilon',\delta')$
labeled samples in the supervised setting, where $Q$ is the set of 
candidate cluster centers. 
We show that $m(Q, \epsilon', \delta')$ is small both for $k$-means 
instances in Euclidean space and for those in finite metric spaces.
We further show that, for the Euclidean $k$-means instances, we can avoid the
dependency on $n$ in the query complexity at the expense of an increased
dependency on $k$: specifically, we give a slightly more involved algorithm 
that uses 
$O{(k^4/(\epsilon^2 \delta) + (k^{9}/\epsilon^4) \log(1/\delta) +  
k \cdot m{(\R^r, \epsilon^4/k, \delta)})}$ 
oracle queries. 

We also show that the number of queries needed for 
$(1 - \epsilon)$-accuracy in Euclidean $k$-means must linearly depend 
on the dimension of the underlying Euclidean space, and for
finite metric space $k$-means, we show that it must at least be logarithmic 
in the number of candidate centers.
This shows that our query complexities capture the right dependencies 
on the respective parameters.

\end{abstract}


\section{Introduction}
\label{sec:intro}
Clustering is a fundamental problem that arises in many learning tasks. 
Given a set $P$ of data points, the goal  is to output a $k$-partition 
$C_1 \dot{\cup} \dots \dot{\cup} C_k$ of $P$ according to some optimization 
criteria.
In unsupervised clustering, the data points are unlabeled. 
The classic $k$-means problem and other well-studied clustering problems
such as $k$-median fall into this category.

In a general $k$-means clustering problem, the input comprises a finite set
of $n$ points $P$ that is to be clustered, a set of candidate centers $Q$, 
and a distance metric $\dist$ giving the distances between each pair of 
points in $P \cup Q$.
The goal is to find $k$ cluster centers $c_1, \dots, c_k \in Q$ that 
minimizes the cost, which is the sum of squared distances
between each point in $P$ and its closest cluster center. 
In this case, the clustering $\calC$ is defined by setting 
$C_i = \{x \in P: c_i \mbox{ is the closest center to } x \}$ 
for all $i = 1, \dots, k$ and breaking ties arbitrarily. 
Two widely studied special cases are the $k$-means problem in 
Euclidean space (where $P \subset \R^r, Q = \R^r$, and $\dist$ is 
the Euclidean distance function) and the $k$-means problem 
in finite metric spaces (where $(P \cup Q, \dist)$ forms a finite metric space).

Despite its popularity and success in many settings, there are two known 
drawbacks of the 
{unsupervised} $k$-means problem:
\begin{enumerate}
\item Finding the centers that satisfy the clustering goal is computationally 
hard. For example, even the special case of $2$-means problem in Euclidean space 
is NP-hard~\cite{Das08}. 
\item There could be multiple possible sets of centers that minimize the cost. 
However, in practical instances, not all such sets are equally meaningful, 
and we would like our algorithm to find one that corresponds to the concerns 
of the application.
\end{enumerate}

Since $k$-means is NP-hard, it is natural to seek approximation algorithms.
For the general $k$-means problem in Euclidean space, notable approximation
results include the local search by Kanungo et al.~\cite{KMNPSW02} with an 
approximation guarantee of $(9+\epsilon)$ and the recent LP-based  
$6.357$-approximation algorithm by Ahmadian et al.~\cite{ANSW17}. 
On the negative side, Lee et al.~\cite{LSW17} ruled out arbitrarily good 
approximation algorithms for the $k$-means problem on general instances. 
For several special cases, however, there exist PTASes. 
For example, in the case where $k$ is constant, Har-Peled and 
Mazumdar~\cite{HM04} and Feldman et al.~\cite{FMS07} showed how to get a 
PTAS using weak coresets, and in the case where the dimension $d$ is 
constant, Cohen-Addad et al.~\cite{CKM16} and 
Friggstad et al.~\cite{FRS16} gave PTASes based on a basic local 
search algorithm. In addition, Awasthi et al.~\cite{ABS10} 
presented a PTAS for $k$-means, assuming that the input is ``clusterable'' 
(satisfies a certain stability criterion).

Even if we leave aside the computational issues with unsupervised $k$-means, 
we still have the problem that there can be multiple different clusterings that 
minimize the cost. 
To see this, consider the $2$-means problem on the set of vertices of an 
equilateral triangle. 
In this case, we have three different clusterings that give the same minimum 
cost, but only one of the clusterings might be meaningful. 
One way to avoid this issue is to have strong 
assumptions on the input. 
For example, Balcan et al.~\cite{BBG13} considered the problem in a restricted 
setting where any $c$-approximation to the problem also 
classifies at least a $(1 - \epsilon)$ fraction of the points correctly.

Ashtiani et al.~\cite{AKB16} recently proposed a different approach for 
addressing the aforementioned drawbacks.
They introduced a semi-supervised active clustering framework where the 
algorithm is allowed to make queries of the form \emph{same-cluster(x, y)} 
to a domain expert, and the expert replies whether the points $x$ and $y$ 
belong to the same cluster in some fixed optimal clustering. 
Under the additional assumptions that the clusters are contained inside $k$ 
balls in $\mathbb{R}^r$ that are sufficiently 
far away from each other, they presented an algorithm that makes 
$O(k^2 \log k + k (\log n + \log (1/\delta)))$
same-cluster queries, runs in $O(kn \log n + k^2 \log (1/\delta))$ time, 
and recovers the clusters with probability at least $(1 - \delta)$. 
Their algorithm finds approximate cluster centers, orders 
all points by their distances to the cluster centers, and performs 
binary searches to determine the radii of the balls. 
Although it recovers the exact clusters, this approach works only 
when the clusters are contained inside well-separated balls. 
When the clusters are determined by a general Voronoi 
partitioning, and thus distances to the cluster boundaries can differ in 
different directions, this approach fails.

A natural question arising from the work of Ashtiani et al.~\cite{AKB16} is 
whether such strong assumptions on the input structure are necessary. 
Ailon et al.~\cite{ABJK18}  addressed this concern and considered the 
problem without any assumptions on the structure of the underlying true 
clusters.
Their main result was a polynomial-time $(1+\epsilon)$-approximation scheme 
for $k$-means in the same semi-supervised framework as in 
Ashtiani et al.~\cite{AKB16}. 
However, in contrast to Ashtiani et al.~\cite{AKB16}, their work gives no 
assurance on the  accuracy of the recovered clustering compared to the true 
clustering. 
To achieve their goal, the authors utilized importance sampling to 
uniformly sample points from small clusters that significantly contribute 
to the cost. 
Their algorithm makes $O(k^{9}/\epsilon^4)$ 
same-cluster queries,  runs in $O(nr(k^{9}/\epsilon^4))$ time, 
and succeeds with a constant probability.

In this work, we investigate the $k$-means problem in the same 
semi-supervised setting as Ailon et al.~\cite{ABJK18}, but in addition 
to approximating the cost, we seek a solution that is also accurate with 
respect to the true clustering. 
We assume that the underlying true clustering minimizes the cost, 
and that there are no points on cluster boundaries (i.e., the margin 
between each pair of clusters can be arbitrarily small but not zero). 
This last assumption is what differentiates our setup from that of 
Ailon et al.~\cite{ABJK18}. 
It is reasonable to assume that no point lies on the boundary of two 
clusters, as otherwise, to achieve constant accuracy, we would have to 
query at least a constant fraction of the boundary points. 
Without querying each boundary point, we have no way of determining to 
which cluster it belongs.

Observe that if we label all the points correctly with respect to the true 
clustering, the resulting clustering automatically achieves the optimal cost. 
However, such perfect accuracy is difficult to achieve as there may be points 
that are arbitrarily close to each other but belong to different clusters.
Using only a reasonable number of samples, the best we can hope 
for is to recover an approximately accurate solution.  
PAC (Probably Approximately Correct) learning helps us achieve this goal and 
provides a trade-off between the desired accuracy  and the required number of 
samples.

Suppose that we have a solution where only a small fraction of the input points 
is incorrectly classified. 
In this case, one would hope that the cost is also close to the optimal cost.
Unfortunately,
the extra cost incurred by the incorrectly classified points can be 
very high depending on their positions, true labels, and the labels 
assigned to them. 
Our main concern in this paper is controlling this additional cost.

We show that if we start with a constant-factor approximation for the cost, we 
can refine the clustering using a PAC learning algorithm. 
This yields a simple polynomial-time algorithm that, given a $k$-means instance 
and $(\epsilon, \delta) \in (0, 1)^2$ as parameters, with probability at least 
$(1 - \delta)$ outputs a clustering that has a cost of at most $(1 + \epsilon)$ 
times the optimal cost and that classifies  at  least a $(1 - \epsilon)$ 
fraction of the points correctly with respect to the underlying true clustering.  
To do so, the algorithm makes 
$O((k^2 \log n) \cdot m(Q, \epsilon^4, \delta/(k \log n)))$
same-cluster queries. Here, $m(Q, \epsilon', \delta')$ is the sufficient number 
of labeled samples for a PAC learning algorithm to learn $k$ clusters in a 
$k$-means instance with an $\epsilon'$ error and a failure probability 
$\delta'$ in the supervised setting 
(recall that $Q$ is the set of candidate centers). 
We further show that our algorithm  can be easily 
adapted to $k$-median and other similar problems that use the $\ell$'th power of 
distances in place of squared distances for some fixed $\ell > 0$.
We formally present this result as Theorem~\ref{thm:simplealg} in 
Section~\ref{sec:simplealg}.
In Theorem~\ref{thm:simplealg-informal} below, we give an informal statement 
for the case of $k$-means.

\begin{theorem}[An informal version of Theorem~\ref{thm:simplealg}]
\label{thm:simplealg-informal} 
There exists a semi-supervised learning algorithm that, given a 
$k$-means instance, oracle access to same-cluster queries that are 
consistent with some fixed optimal clustering, 
and parameters $(\epsilon, \delta) \in (0, 1)^2$,
outputs a clustering that, with probability at least $(1 - \delta)$, 
correctly labels (up to a permutation of the labels) at least a 
$(1 - \epsilon)$ fraction of the points and, simultaneously, has a cost 
of at most $(1 + \epsilon)$ times the optimal cost. 
In doing so, the algorithm makes 
$O((k^2 \log n) \cdot m(Q, \epsilon^4, \delta/(k \log n)))$ 
same-cluster queries.
\end{theorem}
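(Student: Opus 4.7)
The plan is to combine an unsupervised constant-factor approximation with a PAC-learning refinement that spends all of the same-cluster queries. First, I run any off-the-shelf $O(1)$-approximation for $k$-means (no queries needed) to obtain centers $\tilde c_1,\dots,\tilde c_k\in Q$ and an assignment $\tilde\sigma:P\to[k]$ with $\cost(\tilde\sigma)\le \alpha\cdot\mathrm{OPT}$ for some absolute constant $\alpha$. Using $O(k^2)$ queries (take one point per approximate cluster, query all pairs to collapse equivalence classes under the same-cluster relation, and sample extra points if some true cluster is still uncovered), I produce an \emph{anchor set} containing one point per true cluster, so that afterwards any single $x\in P$ can be labelled with its true cluster index by $O(k)$ queries against the anchors. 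This is the only primitive needed to turn PAC-learning samples into labelled examples.

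Next, for each approximate cluster $i\in[k]$, I partition its members into $O(\log n)$ geometric buckets $B_{i,0},B_{i,1},\dots$ according to $\dist(x,\tilde c_i)$, with distances inside a bucket differing by at most a factor of $2$, and one extra bucket collecting points whose squared distance to $\tilde c_i$ is so small (below $\mathrm{OPT}/\poly(n)$) that their contribution to the cost is $\epsilon$-negligible. For each bucket $B_{i,\ell}$, I treat the restriction of the true partition to $B_{i,\ell}$ as the target concept in the $k$-Voronoi hypothesis class over $Q$ and invoke the posited supervised PAC learner with accuracy $\epsilon^4$ and failure probability $\delta/(k\log n)$. This costs $m\bigl(Q,\epsilon^4,\delta/(k\log n)\bigr)$ samples per bucket, each requiring $O(k)$ queries for labelling, across $O(k\log n)$ buckets, which matches the stated $O\!\bigl((k^2\log n)\cdot m(Q,\epsilon^4,\delta/(k\log n))\bigr)$ query bound; a union bound over the $O(k\log n)$ failure events gives overall success probability at least $1-\delta$.

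The main obstacle is converting the per-bucket $\epsilon^4$ misclassification bound into a multiplicative $(1+\epsilon)$-approximation of the cost, because a PAC guarantee constrains only the \emph{number} of misclassifications while a single misclassified point can in principle be charged to a center arbitrarily farther away than its true center. The bucketing is precisely what controls this. Inside $B_{i,\ell}$ every point has $\dist(x,\tilde c_i)\in[r_{i,\ell},2r_{i,\ell}]$, so by the triangle inequality together with the relaxed parallelogram law $(a+b)^2\le(1+\eta)a^2+(1+1/\eta)b^2$, the extra cost of reassigning any $x\in B_{i,\ell}$ to a different center is bounded by $O(r_{i,\ell}^2)+O(\max_{a,b}\dist^2(\tilde c_a,\tilde c_b))\le O(\poly(k)\cdot\cost(\tilde\sigma))$. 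Multiplying by the $\epsilon^4$ fraction of misclassified points in the bucket and comparing with the $\Theta(|B_{i,\ell}|\,r_{i,\ell}^2)$ that this bucket itself already pays inside $\cost(\tilde\sigma)$, the extra cost per bucket is at most a $\poly(\alpha,k)\cdot\epsilon^4$ fraction of $\mathrm{OPT}$; summing over the $O(k\log n)$ buckets and choosing the fourth power of $\epsilon$ to absorb the $\poly(\alpha,k,\log n)$ blow-up yields overall extra cost at most $\epsilon\cdot\mathrm{OPT}$. Accuracy is then immediate because the bucket-wise errors sum to at most $\epsilon^4|P|\le\epsilon|P|$.
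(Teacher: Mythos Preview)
Your overall architecture (constant-factor approximation, then geometric bucketing of each approximate cluster into $O(\log n)$ shells, then PAC learning per bucket with error $\epsilon^4$ and failure $\delta/(k\log n)$) matches the paper's Algorithm~\ref{alg:simple-alg}. The gap is in your cost argument for misclassified points.

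You write that the extra cost of reassigning $x\in B_{i,\ell}$ to some other center is bounded by $O(r_{i,\ell}^2)+O(\max_{a,b}\dist^2(\tilde c_a,\tilde c_b))\le O(\poly(k)\cdot\cost(\tilde\sigma))$. The last inequality is false: the pairwise distance between cluster centers is \emph{not} controlled by $\mathrm{OPT}$ or by $\cost(\tilde\sigma)$. Take two tight clusters placed at distance $D$ from each other; $\mathrm{OPT}$ can be made arbitrarily small while $D$ is arbitrarily large. A PAC learner that satisfies only the generic $\epsilon^4$-error guarantee is free to assign an $\epsilon^4$ fraction of a bucket to the far cluster, incurring cost $\approx \epsilon^4|B_{i,\ell}|\cdot D^2$, which swamps $\mathrm{OPT}$. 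Consequently your ``absorb the $\poly(\alpha,k,\log n)$ blow-up into $\epsilon^4$'' step cannot work either: the problematic term is not a $\poly(k,\log n)$ multiple of $\mathrm{OPT}$ at all.

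The paper closes exactly this hole by requiring the PAC learner to be \emph{non-inventive} and to have \emph{zero sample error}. Then, if the learner labels some $x\in C_{i,j}$ with class $p$, there must exist a sampled witness $y\in C_{i,j}$ with true label $p$, and $y$ is labeled correctly. Using the triangle inequality and the optimality of $o_p$ for $y$ gives $\dist(x,o_p)\le 2\dist(x,y)+\dist(x,o(x))\le 4\cdot 2^j r_i+\dist(x,o(x))$ (Lemma~\ref{lem:dist_ineq}). This replaces your uncontrolled $\max_{a,b}\dist(\tilde c_a,\tilde c_b)$ by the bucket diameter, after which the shell costs telescope into $\cost(C_i,c_i)\le\alpha\,\mathrm{OPT}$ with no $\log n$ loss. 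Your sketch never invokes (or assumes) these two properties of the learner, and without them the cost bound does not go through.
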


Our algorithm is general and applicable to any family of $k$-means, 
$k$-median, or similar distance based clustering instances that can be 
efficiently learned with PAC learning. 
As shown in Appendix~\ref{app:pac}, these include Euclidean 
and general finite metric space clustering instances. 
In contrast, both Ashtiani et al.~\cite{AKB16} and Ailon et al.~\cite{ABJK18}, 
considered only the Euclidean $k$-means problem.
To the best of our knowledge, ours is the first such result applicable to 
finite metric space $k$-means and both Euclidean and finite metric space 
$k$-median problems.

Ideally, we want $m(Q, \epsilon, \delta)$ to be small.
Additionally, the analysis of our algorithm relies on two natural properties 
of learning algorithms.
Firstly, we require PAC learning to always correctly label all 
the sampled points. 
Secondly, we also require it to not `invent' new labels and only output 
labels that it has seen on the samples.
We show that such learning algorithms with small $m(Q, \epsilon, \delta)$ 
exist both for $k$-means instances in Euclidean space and for those in 
finite metric spaces with no points on the boundaries of the optimal clusters. 
For $r$-dimensional Euclidean $k$-means, $m(Q = \R^r, \epsilon, \delta)$ has 
a linear dependency on $r$. 
For the case of finite metric spaces, $m(Q, \epsilon, \delta)$ 
has a logarithmic dependency on $|Q|$, which is the size of the set of 
candidate centers.
In fact, these learning algorithms are applicable not only to 
$k$-means instances but also to instances of other similar center-based 
clustering problems (where clusters are defined by assigning points
to their closest cluster centers). 

Our semi-supervised learning algorithm is inspired by the work of Feldman 
et al.~\cite{FMS07} on weak coresets. 
Their construction of the weak coresets first obtains an intermediate 
clustering using a constant-factor approximation algorithm and refines 
each intermediate cluster by taking random samples. 
In order to get a good guarantee for the cost, their algorithm partitions 
each cluster into an inner ball that contains the majority of the points, 
and an outer region that contains the remaining points.
We proceed similarly to this construction; however, we further partition the 
outer region into $O(\log n)$ concentric rings and use PAC learning to label 
the points in the inner ball and in each of the outer rings separately. 
For Euclidean $k$-means instances, the number of same-cluster queries needed 
by the algorithm has a logarithmic dependency on the number $n$ of points, 
which is similar (up to a $\poly(\log \log n)$ factor) to that of the 
algorithm by Ashtiani et al.~\cite{AKB16}.  
The advantage of our algorithm is that it works for a much broader range of 
$k$-means instances whereas the applicability of the algorithm of Ashtiani 
et al.~\cite{AKB16} is restricted to those instances whose clusters 
are contained in well-separated balls in Euclidean space. 

This algorithm is effective in many natural scenarios where the number of 
clusters $k$ is larger than $\log n$. 
However, as the size of the $k$-means instance (i.e., the number of points) 
becomes large, the $\log n$ factor becomes undesirable. 
In Euclidean $k$-means, the number of samples needed by the learning 
algorithm for an $\epsilon$ error and a failure probability $\delta$ does not 
depend on $n$. 
The $\log n$ dependency in the final query complexity is exclusively due to
repeating the PAC learning step on $\Omega(k \log n)$ different partitions 
of $P$. 
To overcome this problem, we present a second algorithm, which is applicable 
only to Euclidean $k$-means instances, inspired by the work 
of Ailon et al.~\cite{ABJK18}. 
This time, we start with a $(1 + \epsilon)$-approximation for the cost and 
refine it using PAC learning. 
Unlike our first algorithm, we only run the PAC learning once on the whole 
input, and thus we completely eliminate the dependency on $n$. 
The disadvantages of this algorithm compared to our first algorithm are the 
slightly more involved nature of the algorithm and the increased dependency 
on $k$ in its query complexity.  Theorem~\ref{thm:indep-n} below formally
states this result. 
The proof follows from the analysis our algorithm in Section~\ref{sec:alg2}.

\begin{theorem}
\label{thm:indep-n}
There exists a polynomial-time algorithm that, given a $k$-means instance
in $r$-dimensional Euclidean space, oracle access to same-cluster queries  
that are consistent with some fixed optimal clustering, and parameters 
$(\epsilon, \delta) \in (0, 1)^2$,
outputs a clustering that, with probability at 	least $(1 - \delta)$, 
correctly labels 
(up to a permutation of the labels) at least a $(1 - \epsilon)$ fraction 
of the points and, 
simultaneously, has a cost of at most $(1 + \epsilon)$ times the optimal cost.
The algorithm makes $O( k^4/(\epsilon^2 \delta) + 
(k^{9}/\epsilon^4) \log (1/\delta ) +  
\allowbreak k \cdot m (\R^r, \epsilon^4/k, \delta) )$ 
same-cluster queries.
\end{theorem}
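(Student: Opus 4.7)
The plan is to replace the $\Omega(\log n)$ applications of PAC learning used in the proof of Theorem~\ref{thm:simplealg} by a single application per cluster, which becomes affordable once we start from a $(1+\epsilon)$-cost approximation rather than a constant-factor one. This trades the $\log n$ factor for a higher polynomial dependence on $k$, and the three additive terms in the stated bound correspond to three stages of the algorithm.

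In the first stage, I would run the $(1+\epsilon)$-approximation scheme of Ailon et al.~\cite{ABJK18} and boost its constant success probability to $1-\delta/3$ by $O(\log(1/\delta))$ independent repetitions (keeping the solution of smallest estimated cost, with costs estimated by a few extra queries). This accounts for the $(k^9/\epsilon^4)\log(1/\delta)$ term and yields centers $\hat c_1,\ldots,\hat c_k\in\R^r$ whose induced Voronoi clustering $\hat\calC$ has cost at most $(1+\epsilon)\,\mathrm{OPT}$. In the second stage, I would use $O(k^4/(\epsilon^2\delta))$ same-cluster queries on uniform samples to match each label of $\hat\calC$ to the true-cluster label returned by the oracle and to estimate cluster sizes to within a $\Theta(\epsilon/k^2)$-additive error; this lets me separate ``small'' clusters (whose total contribution to accuracy and cost is negligible, so they can be assigned to any identified center) from ``large'' clusters worth refining. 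In the third stage, on each of the at most $k$ large clusters, I would invoke the PAC learner once with accuracy $\epsilon^4/k$ and failure probability $\delta/k$, contributing $O(k\cdot m(\R^r,\epsilon^4/k,\delta))$ queries and keeping the total failure probability below $\delta$ via a union bound.

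The accuracy guarantee is then immediate: the PAC calls jointly mislabel at most an $\epsilon^4$ fraction of the inputs, and small-cluster mishandling adds another $O(\epsilon)$-fraction, well within the $(1-\epsilon)$ target. The hard part is the cost bound, because, unlike in Theorem~\ref{thm:simplealg}, the PAC errors are not localized to thin concentric annuli where extra cost is automatically small; a single mislabeled outlier could in principle be very expensive. This is precisely why the PAC accuracy is set to $\epsilon^4/k$ rather than $\epsilon$: for each mislabeled point $x$ with true center $c^*$ and assigned center $\tilde c$, expanding $\distsq(x,\tilde c)\le 2\distsq(x,c^*)+2\distsq(c^*,\tilde c)$ yields a cross term that, summed over mislabels, is bounded by (number of mislabels in the cluster) times (average squared distance between corresponding approximate and true centers). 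The starting $(1+\epsilon)$-approximation controls the latter quantity in an averaged sense, and $\epsilon^4/k$ is calibrated so that, after a Cauchy--Schwarz step, the total excess cost is $O(\epsilon)\cdot\mathrm{OPT}$. Rescaling $\epsilon$ by a suitable constant then delivers the claimed simultaneous $(1+\epsilon)$-cost and $(1-\epsilon)$-accuracy guarantees.
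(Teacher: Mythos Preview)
Your three stages roughly match the paper's (and in particular your accounting for the three additive query terms is correct), but the crucial mechanism that makes the cost bound go through is missing, and the sketch you give for it does not work. In your inequality $\distsq(x,\tilde c)\le 2\distsq(x,c^*)+2\distsq(c^*,\tilde c)$, if $x\in O_i$ is PAC-mislabeled into cluster $j$ then $c^*=o_i$ and $\tilde c$ is (an approximation of) $o_j$, so the ``cross term'' is essentially $\distsq(o_i,o_j)$, an \emph{inter-cluster} center distance. The $(1+\epsilon)$-approximation controls only $\distsq(o_i,c_i)$ for the \emph{same} index $i$ (via Lemma~\ref{lem:approx-centroid}); it says nothing about distances between centers of different clusters, which can be arbitrarily large relative to $OPT/n$. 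A single mislabeled point can therefore already contribute more than $OPT$, and no Cauchy--Schwarz averaging fixes this. Running PAC on each Voronoi cell separately does not help either: the non-inventive property gives you a witness $y\in\hat C_i\cap O_j$, but the diameter of a Voronoi cell is unbounded, so you are back to exactly the obstacle that forced the $O(\log n)$ annuli in Algorithm~\ref{alg:simple-alg}.

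The paper's fix is a different final labeling rule (Step~\ref{step:recolor} of Algorithm~\ref{alg:indep-n}): the uniform sampling is done \emph{first} to identify the large clusters and approximate their centers, Ailon et al.\ is then run starting from those centers to fill in the rest, and the PAC learner is run \emph{once} on all of $P$. A point $p$ that PAC assigns to a large-cluster label $i\in[k']$ is kept there only if $\distsq(p,c_i)\le k\,OPT^\ast/(n\epsilon^3)$; otherwise it falls back to its nearest approximate center. This hard cap is what makes the calibration $\epsilon^4/k$ pay off: the at most $(\epsilon^4/k)n$ PAC-mislabeled points that pass the threshold contribute at most $(\epsilon^4/k)n\cdot k\,OPT^\ast/(n\epsilon^3)=\epsilon\,OPT^\ast\le 2\epsilon\,OPT$ in excess cost, while every other point is charged against the $(1+\epsilon)$-approximation. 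The accuracy argument then also needs to account for correctly-PAC-labeled points that fail the distance test (a Markov-type bound shows there are at most a $4\epsilon$ fraction of them; see Lemma~\ref{lem:acc-and-cost}).
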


For the Euclidean setting, the query complexities of both our algorithms
have a linear dependency on the dimension of the Euclidean space. 
The algorithm of Ashtiani et al.~\cite{AKB16} does not have such a dependency 
due to their strong assumption on the cluster structure, whereas the one by 
Ailon et al.~\cite{ABJK18} does not have that as it only approximates the cost. 
We show that, in our scenario, such a dependency is necessary to achieve the 
accuracy guarantees of our algorithms.
For the finite metric space $k$-means, the query complexity of our general 
algorithm has an $O(\poly (\log |P|,  \log |Q|))$ dependency. 
The dependency on $|P|$ comes from the repeated application of the learning 
algorithm on $\Omega(k \log |P|)$ different partitions, and whether we can 
avoid this is an open problem. 
However, we show that an $\Omega(\log |Q|)$ query complexity is necessary for 
the accuracy. 
Formally, we prove the following theorem in Section~\ref{sec:lbs}.

\begin{theorem}
\label{thm:lbs}
Let $K$ be a family of $k$-means instances. Let $\calA$ be an algorithm that, 
given a $k$-means instance in $K$, oracle access to same-cluster queries
for some fixed optimal clustering, and parameters 
$(\epsilon, \delta) \in (0, 1)^2$,
outputs a clustering that, with probability at least $(1 - \delta)$, 
correctly labels (up to a permutation of the cluster labels) at least a 
$(1 - \epsilon)$ fraction of the points.
Then, the following statements hold:
\begin{enumerate}
\item If $K$ is the family of $k$-means instances in $r$-dimensional 
Euclidean space that have no points on the boundaries of optimal clusters, 
$\calA$ must make $\Omega(r)$ same-cluster queries.
\item If $K$ is the family of finite metric space $k$-means instances 
that have no points on the boundaries of optimal clusters, $\calA$ must make
$\Omega(\log |Q|)$ same-cluster queries.
\end{enumerate} 
\end{theorem}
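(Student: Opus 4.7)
The plan is to prove both lower bounds via Yao's minimax principle. For each part I will exhibit a distribution $\mu$ over instances in $K$ so that no deterministic algorithm making fewer than the claimed number $q$ of queries is $(1-\epsilon)$-accurate with probability $\geq 1-\delta$ on a $\mu$-random instance. The common counting step: on any fixed algorithm-visible input, the algorithm's output is determined by its transcript of at most $q$ one-bit oracle answers and thus takes at most $2^q$ distinct values, even modulo cluster-label permutations. If $\mu$ is supported on $N$ instances with a shared point set whose oracle-chosen optimal clusterings are pairwise more than $2\epsilon n$ apart (so one output cannot be $(1-\epsilon)$-accurate for two of them), then $2^q \geq (1-\delta)N$.

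For Part~1, I would realize $N=2^r$ by using a product of $r$ ``four-corner'' gadgets placed in orthogonal 2D subspaces of $\R^{2r}$: in the $j$-th subspace place $m$ duplicated copies of each of the four points $(\pm 1, \pm 1)$. With $k = 2r$, the optimal $k$-means clustering splits each gadget into two pairs along one of its two axes (by orthogonal-subspace additivity, any clustering that mixes two gadgets into a single cluster strictly increases the cost), giving $2^r$ equal-cost optimal clusterings indexed by $\theta \in \{0,1\}^r$. A direct distance computation---each point sits at distance $1$ from its own cluster center, $\sqrt{3}$ from any other gadget's center, and $\sqrt{5}$ from the second center in its own gadget---shows no such optimum places a point on its own Voronoi boundary, so each of the $2^r$ instances really lies in $K$. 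Flipping one bit of $\theta$ relabels $2m$ points out of $4mr$, so with $\epsilon < 1/(4r)$ the $(1-\epsilon)$-accuracy requirement forces exact recovery of $\theta$, and the counting bound yields $q = \Omega(r)$.

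For Part~2 I would use the analogous construction in a finite metric: a shared point set with $|Q|$ distinct, pairwise-far, boundary-free optimal clusterings, one for each candidate center $q_i \in Q$ chosen as one of the $k$ cluster centers; the same counting argument then gives $q = \Omega(\log|Q|)$. The main obstacle in both parts is reconciling the multiplicity of optimal clusterings on a single instance with the no-boundary-points condition: natural tie constructions place ``swing'' points directly on a Voronoi boundary and therefore fail. My resolution exploits the four-corner gadget (and its finite-metric analogue), whose multiple $k$-means optima all individually avoid their own Voronoi boundaries; the orthogonality of the gadgets' subspaces and the additivity of squared distances then propagate boundary-freeness to the $r$-fold product. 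The only remaining verification is to rule out exotic global optima that mix gadget centers across subspaces, which reduces to a straightforward cost comparison.
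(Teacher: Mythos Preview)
Your Yao-plus-counting framework is sound, but the concrete construction for Part~1 is both more complicated than needed and has real gaps. The paper uses a far simpler instance: $P=\{\pm e_1,\ldots,\pm e_r\}\subset\R^r$ with $k=2$. Its Lemma~\ref{lem:opt-clust} shows that \emph{every} optimal $2$-clustering places exactly one of $\pm e_i$ in each cluster, so there are $2^r$ optima and nothing else; each is immediately boundary-free since $e_i$ is strictly closer to $\mu(O_1)$ than to $-\mu(O_1)$. Because flipping $\sigma_i$ for any index $i$ whose pair $\{\pm e_i\}$ was never touched by a query leaves the entire transcript unchanged, a direct information argument gives $\Omega(r)$ queries already for constant $\epsilon$. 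Part~2 is then obtained by embedding this very instance into a finite metric, taking $Q$ to be all centroids of subsets of $P$.

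Your four-corner product has three problems. First, the claim that ``any clustering that mixes two gadgets into a single cluster strictly increases the cost'' is false: pairing a corner of gadget $j$ with a corner of gadget $j'$ gives a two-point cluster whose points are at distance $2$, hence cost $2$, the same as an axis pair within one gadget; there are therefore many exotic optima beyond your $2^r$, and since the paper's ``no boundary points'' condition quantifies over \emph{all} optimal clusterings, you must verify it for every one of them. Second, your pairwise-separation counting step forces $\epsilon<1/(4r)$, so as written it proves nothing for constant $\epsilon$; you would need a packing (Gilbert--Varshamov) argument on $\{0,1\}^r$, or an expectation argument over unqueried gadgets, neither of which you state. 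Third, your construction lives in $\R^{2r}$ with $k=2r$ clusters, so it says nothing about the family with fixed $k$ (in particular $k=2$), which is exactly what the paper's construction handles. Your Part~2 sketch is too vague to evaluate; note that the paper's route makes it a one-line corollary of Part~1.
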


The outline of this extended abstract is as follows. 
In Section~\ref{sec:prelim} we introduce the notation, formulate the problem 
and present the learning theorems that we use in the subsequent 
sections. 
In Section~\ref{sec:simplealg} we present our first algorithm, which is 
simple and applicable to general $k$-means instances that admit efficient 
learning algorithms, but has a dependency of $\log n$ in its query complexity. 
In Section~\ref{sec:alg2} we discuss how to remove the $\log n$ dependency 
in the query complexity for the special case of Euclidean $k$-means instances 
and present our second algorithm. 
In Section~\ref{sec:lbs}, we prove our query lower bound claims of Theorem~\ref{thm:lbs}. 
In Appendix~\ref{app:pac}, we introduce the basic concepts and tools of PAC 
learning and explain how to design learning algorithms for Euclidean and finite
metric space $k$-means instances.


\section{Preliminaries}
\label{sec:prelim}

In this section, we introduce the basic notation and two common 
families of $k$-means instances, and formally define the $k$-means problem 
that we address in this work.
We also introduce the notion of \emph{learnability} for families of $k$-means
instances and state two learning theorems that will be used in the 
later sections.

\subsection{k-Means Problem in a Semi-supervised Setting}
\label{sec:k-means}

Let $P$ and $Q$ be two sets of points where $|P| = n$, and let 
$\dist:(P \cup Q) \times (P \cup Q) \to \R_+$ be a distance metric.
We denote a $k$-means instance by the triple $(P, Q, \dist)$.
Two common families of $k$-means instances we consider in this work are: 
\begin{enumerate}
\item $k$-means instances in Euclidean space, where $P \subset \R^r$, 
$Q = \R^r$, and $\dist(x_1, x_2) = \|x_1 - x_2\|$ is the Euclidean 
distance between $x_1$ and $x_2$, and
\item $k$-means instances in finite metric spaces, where $(P \cup Q, \dist)$ 
forms a finite metric space.
\end{enumerate}

Let $[k] := \{1, \dots, k\}$. We identify a $k$-clustering $\calC$ of 
$(P, Q, \dist)$ by a labeling function $f_\calC:P \to [k]$, and a set of $k$ 
centers, $c_1, \dots, c_k \in Q$, associated with each label, $1, \dots, k$.
For each label $i \in [k]$ of a clustering $\calC$, let 
$C_i := \{p \in P : f_\calC(p) = i \}$ be the set of points whose label is $i$. 
For convenience, we may use the labeling function $f_\calC$ or the set 
of clusters $\{C_1, \dots, C_k\}$ interchangeably to denote a clustering 
$\calC$.

For a subset $C \subseteq P$ and a point $q \in Q$, define 
$\cost(C, q) := \sum_{p \in C} \distsq(p, q)$. 
For each $i$, define center 
$c_i := \argmin_{q \in Q} \cost(C_i, q)$, 
i.e., each center is a point in $Q$ that minimizes the sum of squared distances 
between itself and each of the points assigned to it.
For a $k$-clustering $\calC$, we define its \emph{k-means cost} as
$\cost(\calC) := \sum_{i \in [k]} \cost(C_i, c_i).$
Let $\calC^\ast$ be the set of all $k$-clusterings of $(P, Q, \dist)$.
Then, the optimal $k$-means cost of $(P, Q, \dist)$ is defined as
$ OPT  := \min_{\calC \in \calC^\ast} \cost(\calC).$
We say that a $k$-clustering $\calC$ $\alpha$-approximates the $k$-means cost if 
$\cost(\calC) \leq \alpha OPT$. 

Let $\calO$ be a fixed $k$-clustering of $(P, Q, \dist)$ that achieves the 
optimal $k$-means cost, and let $\calC$ be any $k$-clustering of $P$. 
Let $f_\calO$ and $f_\calC$ be the labeling functions that correspond to $\calO$ 
and $\calC$ respectively.
We assume that we have oracle access to the labeling function $f_\calO$ of the 
optimal target clustering up to a permutation of the labels.
We can simulate a single query to such an oracle with $O(k)$ queries to a 
same-cluster oracle as explained in Algorithm~\ref{alg:get-cluster}. 
A same-cluster oracle is an oracle that answers $same\text{-}cluster(p_1, p_2)$ 
queries with `yes' or `no' based on whether $p_1$ and $p_2$ belong to the 
same cluster in the fixed optimal clustering $\calO$.

The error of a clustering $\calC$ with respect to the clustering 
$\calO$ for a $k$-means instance $(P, Q, \dist)$ is now defined as
$ \error(\calC, \calO) := \min_\sigma | \{ p \in P : f_\calO(p) \neq 
\sigma(f_\calC(p))\} |$, where the
minimization is over all permutations $\sigma: [k] \to [k]$.
In other words, $\error(\calC, \calO)$ is the minimum number of points 
incorrectly labeled  by the clustering $\calC$ with respect to the 
optimal clustering $\calO$, considering all possible permutations of the 
cluster labels. 
The reason for defining $\error$ in this manner is because we use a 
simulated version of $f_\calO$ (which is only accurate up to a permutation 
of the cluster labels) instead of the true $f_\calO$ to learn 
cluster labels.
We say that a $k$-clustering $\calC$ is $(1 - \alpha)$-accurate with 
respect to $\calO$ if $\error(\calC, \calO) \leq \alpha n$.

\SetKwInOut{Global}{Global}

\begin{algorithm}
  \SetKwInOut{Input}{Input}\SetKwInOut{Output}{Output}
  \Input{A point $x \in X$, oracle access to $same\text{-}cluster(x_1, x_2)$.}
  \Output{A label $i \in [k]$.}
  \Global{A list of points $S = [ \, \, \,  ]$.}
  \BlankLine
  \For{$1 \leq i \leq length(S)$}{
    \If{$same\text{-}cluster(x, S[i])$}   {Return $i$}
  }
  Append $x$ to $S$.\\
  Return $length(S)$.\\
  \caption{Simulating a labeling oracle with the same-cluster oracle.}
  \label{alg:get-cluster}
\end{algorithm}

Given $(P, Q, \dist)$, parameters $k$ and 
$(\epsilon, \delta) \in (0, 1)^2$, and oracle access to $f_\calO$, our goal
is to output a $k$-clustering $\hat{\calO}$ of $(P, Q, \dist)$ that, with
probability at least $(1-\delta)$,  satisfies $\error(\hat{\calO}, \calO) 
\leq \epsilon n$ and $\cost(\hat{\calO}) \leq (1 + \epsilon) OPT$.

\subsection{PAC Learning for  k-Means}
\label{sec:pac-for-kmeans}

Let $K$ be a family of $k$-means instances, and let 
$m( Q, \epsilon, \delta)$ be a positive integer-valued function. 
We say such a family $K$ is \emph{learnable} with \emph{sample complexity} $m$ 
if there exists a learning algorithm $\calA_L$ such that the following holds: 
Let $\epsilon \in (0, 1)$ be an error parameter and let $\delta \in (0, 1)$ 
be a probability parameter. 
Let $(P, Q, \dist)$ be a $k$-means instance that belongs to $K$. 
Let $\calO$ be a fixed optimal $k$-means clustering and let $f_\calO$ be 
the associated labeling function. 
Let $T$ be a fixed subset of $P$, and let $S$ be a multiset of at least
$m( Q, \epsilon, \delta)$ independently and uniformly 
distributed samples from $T$. 
The algorithm $\calA_L$, given input $(P, Q, \dist)$ and $(s, f_\calO(s))$ 
for all $s \in S$, outputs a function $h: P \to [k]$. 
Moreover, with probability at least $(1 - \delta)$ over the choice of $S$, 
the output $h$ agrees with $f_\calO$ on at least a $(1 - \epsilon)$ fraction 
of the points in $T$ 
(i.e., $|\{ p \in T : h(p) = f_\calO(p)  \}| \geq (1 - \epsilon) |T|$).
This simpler notion of learnability is sufficient for the purpose of this 
work although it deviates from that of the general PAC learnability, which 
concerns with samples drawn from arbitrary distributions. 

We say that such a learning algorithm $\calA_L$ has the \emph{zero sample error} 
property if the output $h$ of $\calA_L$ assigns the correct label to all 
the sampled points (i.e., $h(s) = f_\calO(s)$ for all $s \in S$). 
Furthermore, we say that such a learning algorithm $\calA_L$ is 
\emph{non-inventive} if it does not `invent' labels that it has not seen. 
This means that the output $h$ of $\calA_L$ does not assign labels that were 
not present in the input (sample, label) pairs (i.e., if $h(x) = c$ for 
some $x \in P$, then for some sample point $s \in S$, $f_\calO(s) = c$).

In Section~\ref{sec:simplealg}, we present a simple algorithm for 
$(1 + \epsilon)$-approximate and $(1-\epsilon)$-accurate $k$-means clustering 
for a family $K$ of $k$-means instances, assuming that $K$ is learnable with 
a zero sample error, non-inventive learning algorithm. 
In the analysis, zero sample error and non-inventive properties play a key 
role in the crucial step of bounding the cost of incorrectly labeled points 
in terms of that of correctly labeled nearby points.

We now present two learning theorems for the Euclidean setting and the 
finite metric space setting. 
Assuming no point lies on cluster boundaries, 
the theorems state that the labeling function $f_\calO$ of the optimal 
clustering is learnable with a zero sample error, non-inventive 
learning algorithm in both settings. 
We say that a $k$-means instance $(P, Q, \dist)$ has \emph{no boundary points} 
if in any optimal clustering $\calO$ with clusters $O_1, \dots, O_k$ 
and respective centers $o_1, \dots, o_k$, the closest center to any given point 
$p \in P$ is unique (i.e., if $p \in O_i$, $\dist(p, o_i) < \dist(p, o_j)$ 
for all $j \neq i$).

\begin{theorem}[Learning k-Means in Euclidean Space]
\label{thm:learning-alg-euc}
Let $\dist(p_1, p_2) = \|p_1 - p_2 \|$ be the Euclidean distance function. Let 
$K = \{(P, \R^r, \dist) : P \subset \R, |P| < \infty, (P,\R^r,\dist) 
\text{ has no boundary points} \}$
be the family of $k$-means instances that are in $r$-dimensional Euclidean 
space and that have no boundary points.
The family $K$ is learnable with sample-complexity\footnote{$\tilde{O}$ 
hides $\poly(\log \log k, \log \log r)$ factors.}
$m(\R^r, \epsilon, \delta) = \tilde{O}((k^2r \log(k^2r) 
\allowbreak (\log ( {k^3r}/{\epsilon}))+ 
\log ({1}/{\delta}) )/\epsilon ).$
\end{theorem}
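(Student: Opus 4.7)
The plan is to set this up as a realizable multi-class PAC learning problem over the class of Voronoi classifiers. Under the no-boundary-points assumption, the optimal labeling $f_\calO$ coincides (up to relabeling) with a classifier in the hypothesis class
\[
    \calH \;=\; \bigl\{ x \mapsto \argmin_{i \in [k]} \|x - c_i\| \;:\; c_1, \dots, c_k \in \R^r \bigr\},
\]
so the learning task is realizable over $\calH$. It therefore suffices to bound a suitable combinatorial dimension of $\calH$ and to exhibit a polynomial-time ERM.

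First, I would bound the Natarajan dimension of $\calH$. Each $h \in \calH$ is completely determined by the $\binom{k}{2}$ perpendicular-bisector halfspaces of its centers; since halfspaces in $\R^r$ have VC dimension $r+1$, Sauer's lemma applied to each bisector and a union bound over the $\binom{k}{2}$ choices bound the growth function of $\calH$ on any $n$ points by $\bigl(en/(r+1)\bigr)^{(r+1)\binom{k}{2}}$. Inverting this growth bound via the standard dimension--growth relation yields $\Ndim(\calH) = O(k^2 r \log(k^2 r))$.

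Second, I would invoke the standard realizable multi-class PAC sample-complexity bound (Natarajan-dimension version): any ERM for $\calH$ returns a hypothesis that agrees with $f_\calO$ on a $(1-\epsilon)$-fraction of any fixed $T \subseteq P$ with probability at least $1-\delta$, provided
\[
    m \;=\; O\!\left( \frac{\Ndim(\calH)\,\log\bigl(k\,\Ndim(\calH)/\epsilon\bigr) + \log(1/\delta)}{\epsilon} \right).
\]
Substituting the dimension bound reproduces the stated $\tilde{O}$ expression. Third, for the ERM itself, given labeled samples $S$ I would seek centers $c_1,\dots,c_k \in \R^r$ whose Voronoi diagram is consistent with the labels on $S$. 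The constraints $\|s - c_i\|^2 \le \|s - c_j\|^2$ become $2 s \cdot (c_j - c_i) \le \|c_j\|^2 - \|c_i\|^2$, which after relaxing to the slightly larger class of power diagrams (halfspace labelings indexed by pairs) becomes LP-feasibility of polynomial size; realizability via the optimal centers guarantees a feasible solution. Zero sample error is immediate for any ERM in the realizable case, and non-inventiveness is enforced by restricting the output labels to $L = \{f_\calO(s) : s \in S\}$, which is consistent because labels absent from $S$ correspond to cells containing no observed point.

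The step I expect to be the main obstacle is pinning down the Natarajan-dimension computation carefully enough to recover the exact polylogarithmic factors stated: the coarse scaling $k^2 r$ follows cleanly from counting the bisector halfspaces, but extracting the precise $\log(k^3 r/\epsilon)$ factor requires invoking a realizable (rather than agnostic) multi-class PAC bound and handling the Sauer-lemma counting at the right constants. The ERM side is a routine linear-programming exercise and should not cause trouble.
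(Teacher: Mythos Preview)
Your proposal is correct and follows the same high-level structure as the paper (bound the Natarajan dimension of a suitable multiclass hypothesis class, then invoke the realizable multiclass ERM sample-complexity bound of Daniely et al.), but the specific instantiation differs in an interesting way. The paper does \emph{not} learn over the class of Voronoi (or power) diagrams; instead it uses the All-Pairs reduction: for each ordered pair $(a,b)$ of labels it trains an independent halfspace separator $g_{a,b}$ via SVM, and outputs $g(x)=\argmax_a \sum_{b\neq a} g_{a,b}(x)$. The resulting hypothesis class $\calH^\ast$ consists of all classifiers built from $k(k-1)$ \emph{arbitrary} halfspaces, a strictly larger class than yours, and its Natarajan dimension is bounded by the black-box Lemma~29.6 of Shalev-Shwartz and Ben-David (any multiclass predictor built from $l$ binary classifiers of VC-dimension $d$ has Natarajan dimension $O(dl\log(dl))$), rather than by a direct Sauer/growth-function argument as you do. What the paper's choice buys is a trivially decomposable ERM: $k(k-1)$ independent linear separations, each solvable by off-the-shelf SVM, with no global consistency LP required; non-inventiveness is enforced by hard-wiring $g_{a,a'}\equiv 1$ and $g_{a',a}\equiv -1$ for unseen labels $a'$, which forces unseen labels to always lose the $\argmax$. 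What your choice buys is a tighter hypothesis class (so in principle a smaller constant in the Natarajan bound, though asymptotically both routes give $O(k^2 r\log(k^2 r))$) and a single global LP, at the cost of having to pass through the power-diagram relaxation to linearize the constraints. One small point to tidy: since your ERM actually searches over power diagrams rather than Voronoi diagrams, the Natarajan-dimension bound you need is for the power-diagram class; your Sauer-based argument already covers it (power diagrams are still determined by $\binom{k}{2}$ halfspaces), but you should state this explicitly.
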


\begin{theorem}[Learning k-Means in Finite Metric Spaces]
\label{thm:learning-alg-fms}
Let $K = \{ (P, Q, \dist) : (P \cup Q, \dist) $ is a finite metric space, and 
$(P,Q,\dist) \text{ has no boundary points} \}$
be the family of finite metric space $k$-means instances that have no 
boundary points.
The family $K$ is learnable with sample-complexity\footnote{$\tilde{O}$ 
hides $\poly(\log \log k, \log \log |Q|)$ factors.}
$m( Q, \epsilon, \delta) = \tilde{O}( (k^2 (\log k) (\log |Q|) 
( \log k + \log {1}/{\epsilon} ) + 
\log ( {1}/{\delta} )) /\epsilon ).$
\end{theorem}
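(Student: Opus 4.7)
The plan is to reduce the problem to standard realizable multiclass PAC learning on the hypothesis class $H$ whose elements are the labelings $h_{(c_1,\dots,c_k)}: P \to [k]$ defined by $h_{(c_1,\dots,c_k)}(p) = \argmin_i \dist(p, c_i)$ for $(c_1,\dots,c_k) \in Q^k$, with ties broken by lexicographic order on $Q$. Since the optimal clustering $\calO$ is induced by some $k$-tuple of centers in $Q^k$ and there are no boundary points, the target labeling $f_\calO$ lies in $H$, putting us in the realizable setting with respect to the uniform distribution on $T$.

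Next, I would bound a combinatorial dimension of $H$. The crude bound $|H| \le |Q|^k$ already yields Natarajan dimension $\le k \log_2 |Q|$, but to obtain the $k^2 \log k \log |Q|$ scaling stated in the theorem it is cleaner to view $H$ as arising from $\binom{k}{2}$ binary ``which of two centers is closer'' classifiers, each of VC dimension $O(\log |Q|)$ since there are at most $|Q|^2$ such binary functions. A standard lift from binary to $k$-ary classification, going through the graph dimension (which inflates Natarajan dimension by at most a $\log k$ factor) and accounting for all $\binom{k}{2}$ pairwise comparisons, then yields a dimension bound of $O(k^2 \log k \log |Q|)$ for $H$.

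Equipped with this dimension bound, I would invoke the realizable multiclass PAC theorem: for any hypothesis class of graph dimension $d$, an ERM trained on $O\bigl((d(\log k + \log(1/\epsilon)) + \log(1/\delta))/\epsilon\bigr)$ labeled samples returns, with probability $\ge 1 - \delta$, a hypothesis that agrees with the target on a $\ge (1-\epsilon)$ fraction of $T$. Substituting $d = O(k^2 \log k \log |Q|)$ matches the announced sample complexity up to the $\poly(\log\log k, \log\log |Q|)$ factors absorbed by $\tilde O$. The learning algorithm is ERM: enumerate $k$-tuples in $Q^k$ and output any induced labeling consistent with the labeled sample $\{(s, f_\calO(s))\}_{s \in S}$. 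Realizability guarantees such a labeling exists (the optimal centers), so the output has zero sample error automatically.

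To enforce the non-inventive property, I would restrict ERM to $k$-tuples whose induced labeling uses only labels present in the sample. A label $i$ can be absent from $S$ only if the true cluster $O_i$ has empirical mass zero, which by a Chernoff--union bound over the $k$ labels happens (except with probability $\delta$) only for true clusters of mass $O(\log(k/\delta)/m) \le \epsilon$ once $m$ meets the target bound; relabeling the resulting points to any label seen in $S$ therefore contributes at most $O(\epsilon)$ extra error, absorbed by adjusting constants. The main obstacle is the clean derivation of the $k^2 \log k \log |Q|$ dimension bound for the multiclass Voronoi class $H$; the realizable PAC machinery and the non-inventive post-processing are then routine.
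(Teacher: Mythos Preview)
Your proposal is correct and follows essentially the same route as the paper, with two differences worth noting.

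First, the paper builds its multiclass class via the All-Pairs scheme: for each ordered pair $(a,b)$ of labels it learns an \emph{independent} binary classifier $g_{a,b}$ from the class $\calH^{\fms}_{\bb} = \{h_{q_1,q_2} : (q_1,q_2)\in Q^2\}$, where $h_{q_1,q_2}(p)$ outputs $\pm 1$ according to which of $q_1,q_2$ is nearer to $p$, and then combines by voting. The Natarajan dimension of the resulting class is bounded via the composition lemma (Lemma~29.6 of Shalev-Shwartz and Ben-David: $l=k(k-1)$ binary classifiers each of VC-dimension $O(\log|Q|)$), and the sample bound follows from the realizable multiclass theorem of Daniely et al. Your Voronoi class $H$ is a strict subclass of the paper's $\calH^\ast$ (the pairwise comparisons are coupled through shared centers), so the same bound applies; in fact your crude finite-class estimate $\Ndim(H)\le k\log_2|Q|$ is already tighter than what the theorem states. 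The paper's detour through All-Pairs buys uniformity: the identical argument, with $\calH^{\euc}_{\bb}$ in place of $\calH^{\fms}_{\bb}$, proves the Euclidean theorem, where $|Q|=\infty$ and your cardinality bound is unavailable.

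Second, for non-inventiveness the paper simply hard-wires $g_{a,a'}\equiv +1$ and $g_{a',a}\equiv -1$ for every unseen label $a'$; a two-line inequality then shows the voting rule can never output $a'$. Your route---restricting ERM to tuples over seen labels and absorbing the relabeling error by a Chernoff argument---also works, but the restricted hypothesis class is sample-dependent, so strictly you would need to union-bound over the at most $2^k$ possible label subsets (an extra $+k$ in the sample complexity, which is harmless). The paper's fix is deterministic and avoids this technicality.
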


We prove Theorem~\ref{thm:learning-alg-euc} and Theorem~\ref{thm:learning-alg-fms} 
in Appendix~\ref{app:pac}, where we also introduce 
the necessary PAC learning concepts and tools.


\section{A Simple Algorithm for \texorpdfstring{$(1+\epsilon)$}{(1 + epsilon)} Cost and \texorpdfstring{$(1-\epsilon)$}{(1 - epsilon)} Accuracy}
\label{sec:simplealg}

Let $K$ be a family of $k$-means instances that is learnable with sample 
complexity $m$ using a zero sample error, non-inventive learning algorithm 
$\calA_L$. 
Let $\calA_\alpha$ be a constant-factor approximation algorithm 
(in terms of cost) for $k$-means, and let 
$\calA_1$ be a polynomial-time algorithm for the $1$-means problem (i.e., given 
$(P, Q, \dist) \in K$, $\calA_1$ finds $\argmin_{q \in Q} \cost(P, q)$ 
in polynomial time).
We present a simple semi-supervised learning algorithm that, given a 
$k$-means instance $(P, Q, \dist)$ of class $K$ and oracle access to the 
labeling function $f_\calO$ of a fixed optimal clustering $\calO$ of 
$(P, Q, d)$, outputs a clustering $\hat{\calO}$ that, with probability at 
least $(1 - \delta)$, satisfies $\cost(\hat{\calO}) \leq (1 + \epsilon)OPT$ 
and $\error(\hat{\calO}, \calO) \leq \epsilon |P|$.
Our algorithm uses $\calA_\alpha$, $\calA_1$, and $\calA_L$ as subroutines and 
makes $O((k\log |P|) \cdot m(Q, \epsilon^4, \delta/(k \log |P|)))$ 
oracle queries.
We show that our algorithm can be easily modified for 
$(1 + \epsilon)$-approximate and $(1 - \epsilon)$-accurate $k$-median and 
other similar distance-based clustering problems.
Towards the end of this section, we discuss several applications of this result, 
namely, for Euclidean and finite metric space $k$-means and $k$-median problems.

Let us start by applying
the learning algorithm $\calA_L$ to learn all the cluster labels. 
If we get perfect accuracy, the cost will be optimal. 
A natural question to ask in this case is: what happens to the cost if the 
learning output has $\epsilon$ error? 
In general, even a single misclassified point can incur an arbitrarily large 
additional cost. 
To better understand this, consider the following: 
Let $O_i, O_j \subseteq P$ be two distinct optimal clusters in the target 
clustering, and let $o_i$, $o_j$ be their respective cluster centers. 
Let $p \in O_i$ be a point that is incorrectly classified and assigned label 
$j \neq i$ by $\calA_L$. 
Also assume that the number of misclassified points is small enough 
so that the centers of the clusters output by the learning algorithm are 
close to those of the optimal clustering. 
Thus, in the optimal clustering, $p$ incurs a cost of $\distsq(p, o_i)$, 
whereas according to the learning outcome, $p$ incurs a cost that is close 
to $\distsq(p, o_j)$. 
In the worst case, $\dist(p, o_j)$ can be arbitrarily larger than 
$\dist(p, o_i)$.

Now suppose that, within distance $\rho$ from $p$, there exists some point 
$q \in O_j$. 
In this case, we can bound the cost incurred due to the erroneous label of 
$p$ using the true cost of $p$ in the target clustering. 
To be more specific, using the triangle inequality, we get the following 
bound for any metric space: 
$\dist(p, o_j) \leq \dist(p, q) + \dist(q, o_j) \leq \rho + \dist(q, o_j)$. 
Furthermore, due to the optimality, $\dist(q, o_j) \leq \dist(q, o_i) 
\leq \dist(q, p) + \dist(p, o_i) \leq \rho + \dist(p, o_i)$. 
Hence, it follows that $\dist(p, o_j) \leq 2 \rho + \dist(p, o_i)$. 
To utilize this observation in an algorithmic setting, we need to make sure 
that, for every point that is misclassified  into cluster $j$, there exists 
a correctly classified nearby point $q$ that belongs to the optimal cluster 
$O_j$. 
Luckily, this is ensured by the combination of zero sample error and 
non-inventive properties of $\calA_L$. 
If a point is misclassified into cluster $j$, the non-inventive property 
says that $\calA_L$ must have seen a sample point $q$ from cluster $j$. 
The zero sample error property ensures that $q$ is labeled correctly by 
$\calA_L$.
To make sure that such correctly labeled points are sufficiently close to 
their incorrectly labeled counterparts, we run $\calA_L$ separately on certain 
suitably bounded partitions of $P$. 

The formal description of our algorithm is given in Algorithm~\ref{alg:simple-alg}. 
The outline is as follows:
First, we run $\calA_\alpha$ on $(P, Q, \dist)$ and obtain an intermediate 
clustering $\calC = \{C_1, \dots, C_k\}$.
For each $C_i$, we run $\calA_1$ to find a suitable center $c_i$. 
Next, we partition each intermediate cluster $C_i$ into an inner ball and 
$O(\log |P|)$ outer rings centered around $c_i$.
We run the learning algorithm $\calA_L$ separately on each of these partitions. 
We choose the inner and outer radii of the rings so that, in each partition, 
the points that are incorrectly classified by the learning algorithm only incur 
a small additional cost compared to that of the correctly classified points.
The final output is a clustering $\hat{\calO}$ that is consistent with 
the learning outputs on each of the partitions. 
For each cluster $\hat{O}_i$, we associate the output of running $\calA_1$ on 
$(\hat{O}_i, Q, \dist)$ as its center.
Note that, due to the accuracy requirements, the cluster 
center to which a point is assigned in the output may not be the cluster center 
closest to that point in the output. 
It remains an interesting problem to find an accurate clustering
in which every point is always assigned to its nearest cluster center.

\begin{algorithm}[ht!]
  \SetKwInOut{Input}{Input}\SetKwInOut{Output}{Output}
  \Input{
	  $k$-Means instance $(P, Q, d)$, oracle access to $f_\calO$, 
	  constant-factor approximation algorithm $\calA_\alpha$ for $k$-means, 
	  $1$-means algorithm $\calA_1$, zero sample-error, non-inventive learning algorithm $\calA_L$ with 
	  sample complexity $m$, accuracy parameter $0 < \epsilon < 1$, 
	  and failure probability $0 < \delta < 1$.}
  \Output{
	  The clustering $\hat{\calO} = \{\hat{O}_1, \dots, \hat{O}_k\}$ 
	  defined by the labeling 
	  $f_{\hat{\calO}}: P \to [k]$ computed below. 
	  The respective cluster centers are 
	  $\hat{o}_i = \argmin_{q \in Q} \cost(\hat{O}_i, q)$, 
	  which can be found by running $\calA_1$ on $(\hat{O}_i, Q, \dist)$.}
  \BlankLine
  Let $n = |P|$, and let $\gamma = {\epsilon^2}/{(288 \alpha)}$. \\
  Run $\calA_\alpha$ and obtain an $\alpha$-approximate $k$-means clustering   
  $\calC = \{C _1, \dots, C _k\}$. 
  For each $i \in [k]$, run $\calA_1$ on $(C_i, Q, \dist)$ and find 
  centers $c_i = \argmin_{q \in Q} \cost(C_i, q)$. 
  \label{alg-step:const-approx}\\
  \For{$C_i \in \calC$}{
      Let $r_i = \sqrt{{\cost(C_i, c_i)}/{(\gamma |C_i|)}}$. 
      \label{alg-step:rings}\\
      Let $C_{i,0}$ be all points in $C_i$ that are at most $r_i$ away from 
      $c_i$. \\
      Let $C_{i, j}$ be the points in $C_i$ that are between $2^{j-1}r_i$ and 
	 		$2^jr_i$ away from $c_i$ for $j = 1, \dots, {(\log n)}/{2}$.  \\
      Let $m' = m( Q, \gamma^2, \delta/(k \log n))$.
       \label{alg-step:num-samples}\\
     \For{each non-empty $C_{i,j}$}{
        	Sample $m'$ points $x_1, \dots, x_{m'} \in C_{i,j}$ independently and 
				uniformly at random. \\
				Query the oracle on $x_1, \dots, x_{m'}$ and let 
				$S_{i,j} = \{(x_i, f_{\calO}(x_i)): i = 1, \dots, m'\}$. 
				\label{alg-step:sampling}\\
        Run $\calA_L$ on input $(P, Q, \dist)$ and $S_{i,j}$, 
        and obtain a labeling
        $h_{i,j}: C_{i,j} \to [k]$. \label{alg-step:pac-learning}
     }
  }
  Output the clustering $\hat{\calO}$ defined by the following labeling 
  function: \\
  \For{each $i$, $j$, $x \in C_{i,j}$ } { 
    Set $f_{\hat{\calO}}(x) = h_{i,j}(x)$.
  }
  \caption{A simple algorithm for $(1 + \epsilon)$-approximate 
  $(1 - \epsilon)$-accurate $k$-means.}
  \label{alg:simple-alg}
\end{algorithm}

We now analyze Algorithm~\ref{alg:simple-alg} and show that, with probability at least $(1 - \delta)$, 
it outputs a $(1 + \epsilon)$-approximate and $(1 - \epsilon)$-accurate $k$-means clustering.

Let $n = |P|$ be the total number of points in the $k$-means instance.
Assume that $0 < \gamma < \txtfrac{1}{2}$. 
For all $i, j, p$, let $H_{i,j,p} = \{ x \in C_{i,j}: h_{i,j}(x) = p\}$ be the 
set of points that are in $C_{i,j}$ and labeled $p$ by the output $h_{i,j}$ of 
the learning algorithm $\calA_L$. 
Call a point $x \in P$ \emph{bad} if $x \in H_{i,j,p}$ but $f_\calO(x) \neq p$; 
otherwise, call it \emph{good}. 
Denote the set of bad points by $B$ and let the complement $B^c$ 
of $B$ be the set of good points. 
For each $i$, let $o_i = \argmin_{q \in Q} \cost(O_i, q)$  
denote the center of cluster $i$ in $\calO$. 
For any point $x \in P$, let $o(x)$ denote the center of the optimal cluster 
for $x$ under the clustering $\calO$. 
Thus, $o(x) = o_p \Leftrightarrow f_\calO(x) = p$.

Notice that, for all $i$, all the points in $C_i$ belong to one of the 
$C_{i,j}$'s. 
In other words, no point in $C_i$ is more than $2^{(\log n)/2}r_i$ away from 
$c_i$, where $r_i =  \sqrt{\txtfrac{\cost(C_i, c_i)}{(\gamma |C_i|)}}$. 
To see this, suppose $x \in C_i$ is a point that is more than $2^{(\log n)/2} 
r_i$  away from $c_i$. 
Then, $\dist^2(x, c_i) \geq  2^{\log n} \cdot \txtfrac{\cost(C_i, c_i)}{(\gamma |C_i|)} 
> \cost(C_i, c_i)$ which is a contradiction.

\begin{lemma}
\label{lem:simple-correct}
With probability at least $(1 - \delta)$, all non-empty $C_{i,j}$'s 
satisfy $|C_{ij} \cap B| \leq \gamma^2|C_{ij}|$.
\end{lemma}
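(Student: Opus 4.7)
The plan is a direct application of the learnability hypothesis combined with a union bound over the $O(k \log n)$ partition pieces.

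First I would fix attention on a single non-empty $C_{i,j}$. Observe that $C_{i,j}$ is a deterministic subset of $P$ (it is determined entirely by the output of $\calA_\alpha$ and $\calA_1$ in step~\ref{alg-step:const-approx}, which occurs \emph{before} any random sampling in step~\ref{alg-step:sampling}). Thus the sampling step draws $m' = m(Q, \gamma^2, \delta/(k \log n))$ i.i.d.\ uniform samples from the fixed set $T := C_{i,j}$, and feeds them (labeled by $f_\calO$) to $\calA_L$. By the definition of learnability with sample complexity $m$ stated in Section~\ref{sec:pac-for-kmeans}, with probability at least $1 - \delta/(k \log n)$ the resulting hypothesis $h_{i,j}$ agrees with $f_\calO$ on at least a $(1 - \gamma^2)$ fraction of the points in $T$. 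Since the bad points in $C_{i,j}$ are exactly those on which $h_{i,j}$ disagrees with $f_\calO$, this is precisely the statement $|C_{i,j} \cap B| \leq \gamma^2 |C_{i,j}|$.

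Next I would union-bound over all non-empty pieces. For each $i \in [k]$, the index $j$ ranges over $\{0, 1, \dots, (\log n)/2\}$, so the total number of pieces is at most $k \cdot ((\log n)/2 + 1) \leq k \log n$ (for $n \geq 2$). Applying a union bound, the probability that \emph{some} non-empty $C_{i,j}$ violates the $\gamma^2$ bound is at most $k \log n \cdot \delta/(k \log n) = \delta$, which gives the lemma.

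There is really no obstacle here; the delicate part is only bookkeeping, namely checking that the hypotheses of the learnability definition are met. Two small things to verify: (i) the set $T = C_{i,j}$ is fixed before sampling, so the i.i.d.\ uniform sampling assumption of Section~\ref{sec:pac-for-kmeans} applies; and (ii) the parameters passed to $\calA_L$, namely error $\gamma^2$ and failure probability $\delta/(k \log n)$, match the sample size $m' = m(Q, \gamma^2, \delta/(k \log n))$ used in step~\ref{alg-step:num-samples}. Both are immediate from the algorithm's description, so the proof consists only of invoking the learnability guarantee and performing the union bound.
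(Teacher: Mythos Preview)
Your proposal is correct and follows exactly the paper's approach: invoke the learnability guarantee on each non-empty $C_{i,j}$ with error parameter $\gamma^2$ and failure probability $\delta/(k\log n)$, then union-bound over the at most $k\log n$ pieces. Your write-up is in fact more careful than the paper's (you explicitly verify that $C_{i,j}$ is fixed before sampling and count the number of pieces), but the argument is the same.
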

\begin{proof}
Recall that we run $\calA_L$ with $m_{K}(P, Q, \dist, \gamma^2, 
\txtfrac{\delta}{(k \log n)})$ samples. 
Thus, by definition, $\calA_L$, each run of $\calA_L$ succeeds with probability at 
least $\left( 1 - \txtfrac{\delta}{(k \log n)} \right)$. 
Since we only run $\calA_L$ at most $k \log n$ times, the claim follows from the 
union bound.
\end{proof}

We continue the rest of the analysis conditioned on  $|C_{i,j} \cap B| \leq 
\gamma^2 |C_{i,j}|$ for all  $C_{i,j}$. 
In proving the subsequent results, we use the following observations.

\begin{obsv}
\label{obsv:dist-lb}
No two points in $C_{i,j}$ are more than distance $R =  2 \cdot 2^{j}r_i$ 
apart. 
Note that, according to the definition of $C_{i,j}$ in the algorithm, $R$ is 
the outer diameter of the ring that bounds $C_{i,j}$.
\end{obsv}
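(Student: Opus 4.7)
The plan is to verify the observation by directly bounding the distance of each point in $C_{i,j}$ from the common center $c_i$ and then invoking the triangle inequality. The bookkeeping is really the only content here, so I will just be careful about the two cases $j=0$ and $j \geq 1$.

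First I would recall the definitions from the algorithm. For $j=0$, the set $C_{i,0}$ consists of those points in $C_i$ whose distance to $c_i$ is at most $r_i = 2^0 r_i$. For $j \geq 1$, the set $C_{i,j}$ consists of points whose distance to $c_i$ lies in the interval $(2^{j-1} r_i, 2^j r_i]$. In particular, in both cases every point of $C_{i,j}$ lies within distance $2^j r_i$ of $c_i$.

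Now for any two points $x, y \in C_{i,j}$, the triangle inequality applied to the metric $\dist$ gives
\[
\dist(x, y) \;\leq\; \dist(x, c_i) + \dist(c_i, y) \;\leq\; 2^j r_i + 2^j r_i \;=\; 2 \cdot 2^j r_i \;=\; R,
\]
which is exactly the claimed bound. The second sentence of the observation is then just a restatement: since each $C_{i,j}$ is contained in a ball (for $j=0$) or in an annulus (for $j \geq 1$) of outer radius $2^j r_i$ around $c_i$, its diameter is at most twice this outer radius, namely $R$.

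There is no real obstacle here; the observation is an immediate consequence of the triangle inequality and the way the partition $\{C_{i,j}\}$ was defined in Step~\ref{alg-step:rings} of Algorithm~\ref{alg:simple-alg}. The only tiny subtlety to flag is uniform treatment of $j=0$ and $j\geq 1$, which is handled by noting that in both cases the outer radius is $2^j r_i$.
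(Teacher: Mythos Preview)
Your proposal is correct and matches the paper's reasoning: the paper states this as an observation without a formal proof, simply noting that $R$ is the outer diameter of the bounding ring, and your explicit triangle-inequality argument is exactly the justification implicit in that remark.
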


\begin{obsv}
\label{obsv:cost-lb}
For $j \geq 1$, the inner radius of the ring that bounds $C_{i,j}$ is 
$2^{j-1}r_i$. 
Therefore, we have the following lower bound for the cost of $C_{i,j}$:
$\cost(C_{i,j}, c_i) \geq |C_{i,j}| (2^{j-1} r_i) ^ 2.$
\end{obsv}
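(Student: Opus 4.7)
The plan is to unfold the definition of $C_{i,j}$ from Step~\ref{alg-step:rings} of Algorithm~\ref{alg:simple-alg} and then apply a direct per-point lower bound on the squared distance to $c_i$. Recall that, for $j \geq 1$, the set $C_{i,j}$ is defined as those points of $C_i$ whose distance to $c_i$ lies between $2^{j-1} r_i$ and $2^j r_i$. The first sentence of the observation (that the inner radius of the bounding ring equals $2^{j-1} r_i$) is then immediate from this definition and needs no further argument.

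For the cost bound, I will argue pointwise. Fix any $x \in C_{i,j}$; by the definition just recalled, $\dist(x, c_i) \geq 2^{j-1} r_i$, so squaring gives $\distsq(x, c_i) \geq (2^{j-1} r_i)^2$. Summing this inequality over all $x \in C_{i,j}$ yields
\[
\cost(C_{i,j}, c_i) \;=\; \sum_{x \in C_{i,j}} \distsq(x, c_i) \;\geq\; |C_{i,j}| \cdot (2^{j-1} r_i)^2,
\]
which is exactly the claimed lower bound. There is no real obstacle here: the observation is purely a bookkeeping consequence of how the rings $C_{i,j}$ were defined, and the only ingredient beyond the definition is the trivial monotonicity of the squaring function on non-negative reals.
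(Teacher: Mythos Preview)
Your proof is correct and matches the paper's approach: the paper states this as an observation without proof, since it follows immediately from the definition of the ring $C_{i,j}$ in Algorithm~\ref{alg:simple-alg}. Your pointwise argument is exactly the intended one-line justification.
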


\begin{lemma}\label{lem:dist_ineq} 
For all $i,j$ and $p$, if $x \in H_{i,j,p} \cap B$ then 
$\dist(x, o_p) \leq 4 \cdot 2^j r_i + \dist(x, o(x))$.
\end{lemma}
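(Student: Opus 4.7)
The plan is to exhibit a correctly-labeled ``witness'' sample point $s \in S_{i,j}$ that also lies in cluster $O_p$, use Observation~\ref{obsv:dist-lb} to control $\dist(x,s)$, and then chain two triangle inequalities together with the Voronoi property of the optimal clustering $\calO$ to transfer distance bounds between $x$ and the two centers $o_p$ and $o(x)$. This formalizes the informal argument already given in the text before the algorithm.

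First, I would invoke the non-inventive property of $\calA_L$: since $x \in H_{i,j,p}$, the learner outputs label $p$ on $x$, so there must exist a sample point $s \in S_{i,j}$ to which it assigned label $p$. The zero-sample-error property of $\calA_L$ then guarantees that this assignment agrees with the truth, i.e.\ $f_\calO(s) = p$, so $s \in O_p$ and $o(s) = o_p$. Because $s$ was drawn from $C_{i,j}$, both $x$ and $s$ lie in the same partition, and Observation~\ref{obsv:dist-lb} immediately gives $\dist(x,s) \leq 2 \cdot 2^j r_i$.

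Next I would run the triangle inequality twice. The first application gives $\dist(x, o_p) \leq \dist(x,s) + \dist(s, o_p)$. To bound $\dist(s, o_p)$, I use optimality of $\calO$: since an optimal $k$-means clustering assigns every point to its nearest center, and $s \in O_p$, we have $\dist(s, o_p) \leq \dist(s, o(x))$. A second triangle inequality yields $\dist(s, o(x)) \leq \dist(s, x) + \dist(x, o(x))$. Combining these with $\dist(x,s) \leq 2 \cdot 2^j r_i$ produces
\[
\dist(x, o_p) \;\leq\; 2\,\dist(x,s) + \dist(x, o(x)) \;\leq\; 4 \cdot 2^j r_i + \dist(x, o(x)),
\]
which is the claimed bound.

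The only nontrivial step is the first one: producing a nearby witness $s$ with true label $p$. This is exactly what the combination of the non-inventive and zero-sample-error properties of $\calA_L$ was designed to provide, and it is the reason the algorithm partitions each $C_i$ into rings before running $\calA_L$ — so that any witness $s$ chosen by non-inventiveness is automatically close to $x$ via Observation~\ref{obsv:dist-lb}. Once $s$ is in hand, the rest is mechanical: two triangle inequalities and the standard Voronoi property of optimal $k$-means clusterings.
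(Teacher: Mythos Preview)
Your proof is correct and follows essentially the same approach as the paper's: both exhibit a witness sample in $C_{i,j}$ with true label $p$ via the non-inventive and zero-sample-error properties, bound its distance to $x$ by Observation~\ref{obsv:dist-lb}, and then chain two triangle inequalities with the Voronoi property of $\calO$ to obtain $\dist(x,o_p)\le 2\,\dist(x,s)+\dist(x,o(x))$. The only cosmetic difference is that the paper's definition of non-inventive directly yields a sample with \emph{true} label $p$ (so zero-sample-error is invoked only to note that this sample is also correctly classified), whereas you phrase it as first finding a sample the learner labeled $p$ and then using zero-sample-error to conclude its true label is $p$; either route produces the same witness.
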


\begin{proof}
If $x \in H_{i,j,p} \cap B$, then  $x$ is in some optimal cluster denoted by 
$O_q$ for some $q \neq p$. 
Note that if $h_{i,j}$ (i.e., the output of the algorithm $\calA_L$) gives label 
$p$ to some point $x$, then the non-inventive property of $\calA_L$ ensures that 
it has seen at least one point $y \in C_{i,j}$ that is in $O_p$, and the 
zero sample-error property ensures that $y$ is labeled correctly by $h_{i,j}$. 
Thus, $o_p = o(y)$ and $o_q = o(x)$.
Hence, $y$ is a \emph{good} point with label $p$, and consequently, 
we have
\begin{align*}
\dist(x, o_p) &\leq \dist(x, y) + \dist(y, o_p) \leq \dist(x, y) + 
\dist(y, o_q) \leq  2 \dist(x, y) + \dist(x, o(x))\\ &\leq 4 \cdot 2^j r_i + \dist(x, o(x)),
\end{align*}
where the last inequality follows from Observation~\ref{obsv:dist-lb}.
\end{proof}

\begin{lemma}[Squared Triangle Inequality]\label{lem:triangle}
  For any $a,b \ge 0$ and $0 < \epsilon < 1$, we have
  $$(a+b)^2 \le (1+\epsilon) a^2 + \left(1 + \frac{1}{\epsilon} \right) 
b^2 \le (1+\epsilon) a^2 + \left(\frac{2}{\epsilon} \right) b^2\,.$$
\end{lemma}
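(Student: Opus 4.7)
The plan is to reduce the claim to a one-line application of the weighted AM-GM inequality on the cross term of $(a+b)^2$. First I would expand $(a+b)^2 = a^2 + 2ab + b^2$, isolating $2ab$ as the only term that needs to be controlled.

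Next, I would apply the elementary inequality $2xy \le x^2 + y^2$ to the decomposition $2ab = 2(\sqrt{\epsilon}\,a)(b/\sqrt{\epsilon})$, which yields $2ab \le \epsilon a^2 + b^2/\epsilon$. Substituting back gives
\[
(a+b)^2 \le a^2 + \epsilon a^2 + \tfrac{1}{\epsilon} b^2 + b^2 = (1+\epsilon)a^2 + \bigl(1 + \tfrac{1}{\epsilon}\bigr) b^2,
\]
which is the first inequality in the statement.

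For the second inequality, I would simply observe that the hypothesis $0 < \epsilon < 1$ gives $1 \le 1/\epsilon$, hence $1 + 1/\epsilon \le 2/\epsilon$. Multiplying by $b^2 \ge 0$ and adding $(1+\epsilon)a^2$ to both sides preserves the inequality and delivers the second bound.

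There is essentially no obstacle here: the lemma is a standard quantitative refinement of $(a+b)^2 \le 2a^2 + 2b^2$, in which one pays a multiplicative factor $(1+\epsilon)$ on $a^2$ in exchange for a factor $O(1/\epsilon)$ on $b^2$. The only point worth flagging is that the splitting $2ab = 2(\sqrt{\epsilon}\,a)(b/\sqrt{\epsilon})$ is precisely what tunes the trade-off between the two coefficients; any other rescaling would give a weaker form of the same statement.
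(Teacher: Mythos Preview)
Your proposal is correct and matches the paper's own proof essentially line for line: the paper also derives the first inequality from the AM--GM bound $2ab \le \epsilon a^2 + b^2/\epsilon$ and the second from $\epsilon < 1 \Rightarrow 1 < 1/\epsilon$. There is nothing to add.
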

\begin{proof}
The first inequality follows from the AM-GM inequality because 
$2 ab \leq \epsilon a^2 + \txtfrac{b^2}{\epsilon}$. 
The second inequality holds because $\epsilon < 1$ implies 
$1 < \txtfrac{1}{\epsilon}$.
\end{proof}

Now, let us analyze the cost of the labeling output by Algorithm~\ref{alg:simple-alg}:

\begin{align*}
\cost(P, \hat{\calO}) &= \sum_{p \in [k]} \cost(\hat{\calO}_p, \hat{o}_p) 
\leq \sum_{p \in [k]} \cost(\hat{\calO}_p, o_p)  = \sum_{i,j,p} \cost(H_{i,j,p}, o_p). \\
\intertext{Splitting the cost contributions of good and bad points, we get}
\cost(P, \hat{\calO})  &\leq \sum_{x \in B^c} \distsq(x, o(x)) + \sum_{i,j, p} 
\cost(H_{i,j,p} \cap B, o_p) \\
&=  \sum_{x \in B^c} \distsq(x, o(x)) + \sum_{i,j, p} 
\sum_{x \in H_{i,j,p} \cap B} 
\distsq(x, o_p).\\
\intertext{Applying Lemma~\ref{lem:dist_ineq} together with Lemma~\ref{lem:triangle}, for any $\epsilon \in (0, 1)$, we have}
\cost(P, \hat{\calO}) &\leq  \sum_{x \in B^c} \distsq(x, o(x)) + \sum_{i,j, p} 
\sum_{x \in H_{i,j,p} \cap 
B} \left( \left(1 + \frac{\epsilon}{3} \right)  \distsq(x, o(x)) + 
\frac{2 \cdot 3}{\epsilon} (4 \cdot 2^{j}r_i)^2 \right) \\
&\leq  \left(1 + \frac{\epsilon}{3} \right) \sum_{x \in P} \distsq(x, o(x)) + 
\sum_{i,j, p} \sum_{x \in H_{i,j,p} \cap B}  \frac{96}{\epsilon} (2^{j}r_i)^2 \\
&= \left(1 + \frac{\epsilon}{3} \right)  OPT + \sum_{i,j} \sum_{x \in C_{i,j} \cap B}  
\frac{96}{\epsilon} (2^{j}r_i)^2 .\\
\intertext{From Lemma~\ref{lem:simple-correct}, we have
  $|C_{i,j} \cap B| \leq \gamma^2 |C_{i,j}|$, and it follows that}
\cost(P, \hat{\calO}) &\leq  \left(1 + \frac{\epsilon}{3} \right)  OPT + 
\sum_{i,j} \gamma^2 | 
C_{i,j}|  \frac{96}{\epsilon} 
(2^{j}r_i)^2\\
&=\left(1 + \frac{\epsilon}{3} \right)  OPT + \sum_{i} \gamma^2 | C_{i,0}|  
\frac{96}{\epsilon} r_i^2 + \sum_{i,j: j \geq 1} \gamma ^ 2 | C_{i,j}|  
\frac{96}{\epsilon} (2^{j}r_i)^2. \numberthis \label{eqn:cost}
\end{align*}
Consider the last two terms of Equation~\eqref{eqn:cost} individually. 
For the first summation, we have
\begin{align*}
\sum_{i} \gamma^2 | C_{i,0}|  \frac{96}{\epsilon} r_i^2 &= \sum_{i} \gamma^2 | 
C_{i,0}|  \frac{96}{\epsilon} \frac{\cost(C_i, c_i)}{\gamma |C_i|} 
\leq \sum_{i} \frac{96 \gamma}{\epsilon} \cost(C_i, c_i)
\leq \frac{96 \gamma \alpha}{\epsilon}  OPT.
\end{align*}
In the last inequality, we used the fact that $\calC$ gives  an 
$\alpha$-approximation for the optimal cost.
For the second summation of Equation~\eqref{eqn:cost}, Observation~\ref{obsv:cost-lb} gives
\begin{align*}
\sum_{i,j: j \geq 1} \gamma ^ 2 | C_{ij}|  \frac{96}{\epsilon} (2^{j}r_i)^2 &= 
\sum_{i,j: j \geq 1} \frac{4 \cdot 96 \gamma^2 }{\epsilon} |C_{ij}|(2^{j-1} 
r_i)^2 
\leq \sum_{i,j: j \geq 1} \frac{384 \gamma^2 }{\epsilon} \cost(C_{ij}, c_i) \\
&\leq \sum_{i} \frac{384 \gamma^2 }{\epsilon} \cost(C_{i}, c_i) \leq \frac{384 \gamma^2 \alpha}{\epsilon} OPT.
\end{align*}
Here, we have again used the approximation guarantee of $\calC$ in the 
final inequality.

Choosing $\gamma =  \txtfrac{ \epsilon ^2}{(288 \alpha)} $ makes sure that both 
$ \txtfrac{96 \gamma \alpha}{\epsilon} \leq \txtfrac{\epsilon}{3}$ and $\txtfrac{384 \gamma^2 \alpha}{\epsilon} \leq  
\txtfrac{\epsilon}{3}$, and consequently, we get a  final cost of at most 
$(1 + \epsilon) OPT$. 
Recall that we established this bound conditioned on $|C_{i,j} \cap B| 
\leq \gamma^2 |C_{i,j}|$ for all $i,j$. 
In Lemma~\ref{lem:simple-correct}, we saw that all the $O(k \log n)$ runs of 
the learning algorithm $\calA_L$ succeed with probability at least $(1 - \delta)$.
Hence, the condition  $|C_{i,j} \cap B| \leq \gamma^2 |C_{i,j}|$ is true for 
all $i,j$ with the same probability. 
Summing the inequality over all $i,j$ yields $|B| \leq \gamma^2 |P| 
\leq \epsilon |P|$. 
Consequently, the output of Algorithm~\ref{alg:simple-alg}, with probability
at least $(1 - \delta)$ over the choice of samples in Step~\ref{alg-step:sampling}, outputs a $(1+\epsilon)$-approximate and 
$(1 - \epsilon)$-accurate $k$-means clustering.

In Algorithm~\ref{alg:simple-alg}, instead of an exact algorithm
$\calA_1$ for the $1$-means problem, we can also use a PTAS. 
Using a PTAS to approximate $1$-means up to a $(1 + \epsilon)$ factor will 
only cost an additional $(1 + \epsilon)$ factor in our cost analysis.
As a result, we get the same approximation and accuracy guarantees if we 
replace $\epsilon$ with ${\epsilon}/{3}$.

Algorithm~\ref{alg:simple-alg} makes $O((k \log n ) \cdot m(Q, \epsilon^4, 
\allowbreak \delta /(k \log n) ))$ queries
 to the oracle $f_\calO$ in total. Recall that simulating an oracle 
query to $f_\calO$ takes $O(k)$ same-cluster queries. Therefore, the total 
number of same-cluster queries
is $O((k^2 \log n ) \cdot m(Q, \epsilon^4, \allowbreak \delta /(k \log n) ))$.

Our definition of a learning algorithm in Section~\ref{sec:pac-for-kmeans} has 
nothing to do with whether the input is a $k$-means instance or a 
$k$-median instance, which is similar to $k$-means except that the cost of 
a cluster $C$ with respect to a center $q$ is defined as 
$\cost(C, q) := \sum_{p \in C} \dist(p,q)$.
In fact, it applies to any similar clustering scenario where the cost is defined
in terms of the $\ell$'th power ($\ell > 0$) of distances instead of 
squared distances.
The analysis of Algorithm~\ref{alg:simple-alg} can be adapted to any 
fixed $\ell$ once we have a suitable triangle inequality 
analogous to Lemma~\ref{lem:triangle}. 
For example, when $\ell \leq 1$, we can simply use the trivial inequality 
$(a + b)^\ell \leq a^\ell + b^\ell$. 
Thus, for such clustering problems, Algorithm~\ref{alg:simple-alg}, with a 
slight modification on choice of radii in Step~\ref{alg-step:rings} and a 
little adjustment to the parameter $\gamma$, will 
give the same guarantees. Hence, we have the following 
theorem which is the formal version of Theorem~\ref{thm:simplealg-informal}. 
The proof follows from the analysis of Algorithm~\ref{alg:simple-alg}.

\begin{theorem}
\label{thm:simplealg}
Let $K$ be a family of $k$-means ($k$-median) instances. Suppose that $K$ is 
learnable with sample complexity 
$m(Q, \epsilon, \delta)$ using a zero sample error, non-inventive 
learning algorithm $\calA_L$. 
Let $\calA_\alpha$ be a constant-factor approximation algorithm, and 
let $\calA_1$ be a 
PTAS for the $1$-means ($1$-median) problem. 
There exists a polynomial-time algorithm that, given an instance 
$(P, Q, \dist) \in K$, oracle access to same-cluster queries for 
some fixed optimal clustering $\calO$, and parameters 
$(\epsilon, \delta) \in (0, 1)^2$, 
outputs a clustering 
that, with probability at least $(1 - \delta)$, is $(1 - \epsilon)$-accurate 
with respect to $\calO$, and simultaneously has a cost of at most 
$(1 + \epsilon) OPT$. 
The algorithm uses $\calA_L$, $\calA_\alpha$, and $\calA_1$ as subroutines.
The number of same-cluster queries made by the algorithm is
\begin{enumerate}
\item $O( (k^2 \log |P|)\, \cdot m( Q, \epsilon^4, 
{\delta}/{(k \log |P|)} ))$ for the $k$-means setting and
\item $O( (k^2 \log |P|)\, \cdot m( Q, \epsilon^2, 
{\delta}/{(k \log |P|)}))$ for the $k$-median setting.
\end{enumerate} 
\end{theorem}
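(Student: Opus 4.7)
The plan is to prove the theorem by directly analyzing Algorithm~\ref{alg:simple-alg}. First I would run the constant-factor approximation $\calA_\alpha$ to obtain a coarse clustering $\{C_1,\dots,C_k\}$ with centers $c_i$ computed via $\calA_1$, and decompose each $C_i$ into an inner ball $C_{i,0}$ of radius $r_i = \sqrt{\cost(C_i,c_i)/(\gamma|C_i|)}$ together with at most $(\log n)/2$ concentric rings $C_{i,j}$ of outer radius $2^j r_i$. A short argument shows these rings cover $C_i$: any $x \in C_i$ farther than $2^{(\log n)/2} r_i$ from $c_i$ would by itself incur cost greater than $\cost(C_i,c_i)$. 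I would then invoke $\calA_L$ independently on each non-empty ring with $m' = m(Q,\gamma^2,\delta/(k\log n))$ samples and stitch the outputs into the final labeling, recovering cluster centers by running $\calA_1$ on each output cluster.

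For the accuracy bound I would take a union bound over the at most $k\log n$ invocations of $\calA_L$: each succeeds with probability at least $1-\delta/(k\log n)$, so with probability at least $1-\delta$ every non-empty $C_{i,j}$ contains at most $\gamma^2|C_{i,j}|$ bad points, and summing yields $|B|\le \gamma^2 n$. The final choice of $\gamma$ will be small enough to ensure $\gamma^2 n \le \epsilon n$, giving $(1-\epsilon)$-accuracy.

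The main obstacle is bounding the cost, and here the zero-sample-error and non-inventive properties of $\calA_L$ are essential: any bad point $x$ assigned label $p$ by $h_{i,j}$ forces $\calA_L$ to have seen a sample $y\in C_{i,j}\cap O_p$ that it labels correctly. Since both $x,y$ lie in the same ring, $\dist(x,y)\le 2\cdot 2^j r_i$; combining with optimality of the center $o(x)$ for $y$ gives $\dist(x,o_p)\le 4\cdot 2^j r_i + \dist(x,o(x))$. Applying Lemma~\ref{lem:triangle} with parameter $\epsilon/3$ then splits $\distsq(x,o_p)$ into a $(1+\epsilon/3)$-scaled contribution to $OPT$ plus an $O(2^{2j}r_i^2/\epsilon)$ additive term per bad point. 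Summing and using the density bound $|C_{i,j}\cap B|\le \gamma^2|C_{i,j}|$, the inner-ball contribution scales like $\gamma\alpha\, OPT/\epsilon$ (via the defining equality $r_i^2 = \cost(C_i,c_i)/(\gamma|C_i|)$), while the ring contribution scales like $\gamma^2\alpha\, OPT/\epsilon$ (using Observation~\ref{obsv:cost-lb} to bound $(2^{j-1}r_i)^2|C_{i,j}|$ by $\cost(C_{i,j},c_i)$). The delicate point is that a single $\gamma$ has to tame simultaneously a linear-in-$\gamma$ term and a quadratic-in-$\gamma$ term, which is why the PAC error parameter must be $\gamma^2 = \Theta(\epsilon^4/\alpha^2)$; the choice $\gamma = \epsilon^2/(288\alpha)$ makes each at most $(\epsilon/3)\,OPT$ and delivers the $(1+\epsilon)$-approximation. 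If $\calA_1$ is only a PTAS, running it with error $\epsilon/3$ and rescaling absorbs the extra multiplicative factor.

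For the $k$-median adaptation I would replace Lemma~\ref{lem:triangle} by the ordinary triangle inequality, so the expansion of $\dist(x,o_p)$ enters the cost linearly rather than quadratically; this lets $\gamma = \Theta(\epsilon/\alpha)$ and explains why the PAC error parameter becomes $\epsilon^2$ rather than $\epsilon^4$. The same radii decomposition works once $r_i$ is redefined analogously. The query complexity in both settings follows by multiplying: at most $k\log n$ calls to $\calA_L$, each using $m'$ labeled samples, each label obtained via Algorithm~\ref{alg:get-cluster} at a cost of $O(k)$ same-cluster queries.
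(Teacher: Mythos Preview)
Your proposal is correct and follows essentially the same approach as the paper: the same ring decomposition with radii $r_i=\sqrt{\cost(C_i,c_i)/(\gamma|C_i|)}$, the same use of the non-inventive and zero-sample-error properties to obtain $\dist(x,o_p)\le 4\cdot 2^j r_i+\dist(x,o(x))$, the same application of Lemma~\ref{lem:triangle} and Observation~\ref{obsv:cost-lb} to split the cost into a $\gamma\alpha/\epsilon$ inner-ball term and a $\gamma^2\alpha/\epsilon$ ring term, and the same choice $\gamma=\epsilon^2/(288\alpha)$. The $k$-median adaptation and the query-complexity accounting also match the paper's.
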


For $k$-means and $k$-median instances in Euclidean space and 
those in finite metric spaces, there exist several constant-factor 
approximation algorithms 
(for example, Ahmadian et al.~\cite{ANSW17} and Kanungo et al.~\cite{KMNPSW02}). 
Solving the $1$-means problem in Euclidean space is straightforward: 
The solution to $\argmin_{q \in \R^r} \cost(P, q)$ is simply 
$q = {(\sum_{p \in P} p )}/{|P|}$. 
For the $k$-median problem in Euclidean space,
the problem of $1$-median does not have an exact algorithm but 
several PTASes exist (for example, Cohen et al. \cite{CLMPS16}).
In a finite metric space, to solve $\argmin_{q \in Q} \cost(P, q)$, we can 
simply try all possible $q \in Q$ in polynomial time, and this holds for 
the $k$-median setting as well. 
Thus, for Euclidean and finite metric space $k$-means and 
$k$-median instances that have no boundary points, 
Theorem~\ref{thm:simplealg}, 
together with Theorem~\ref{thm:learning-alg-euc} and Theorem~\ref{thm:learning-alg-fms}, gives efficient algorithms 
for $(1 +\epsilon)$-approximate, 
$(1 - \epsilon)$-accurate semi-supervised clustering.


\section{Removing the Dependency on Problem Size in the Query Complexity for Euclidean \texorpdfstring{$\mathbf{k}$}{k}-Means}
\label{sec:alg2}

For the family of Euclidean $k$-means instances, the query complexity of 
Algorithm~\ref{alg:simple-alg} suffers from a $\tilde{O}(\log n)$ dependency 
(where $n$ is the number of points in the input $k$-means instance, and
$\tilde{O}$ hides $\poly (\log \log n)$ factors)
due to the repeated use of the learning algorithm $\calA_L$.
Specifically, we run $\calA_L$ with a failure probability of 
${\delta}/{(k \log n)}$, $O(\log n)$ times per cluster.  
Note that the sample complexity of $\calA_L$ itself, in the case of Euclidean 
$k$-means instances, does not have this dependency.

In this section, we show that we can avoid this dependency on $n$ using a 
slightly more involved algorithm at the cost of 
increasing the query complexity by an extra $\poly(k)$ factor. 
Nevertheless, this algorithm has superior performance when the size of the 
input instance (i.e., the number of points) is very large 
(when $\log n = \Omega(k^{10})$ for example).

Recall that, for a set $C \subset \R^r$, $\cost(C, y)$ is minimized when $y$ 
is the centroid of $C$, denoted by $\mu(C) = (\sum_{x \in C} x)/|C|$. 
Define the fractional size of an optimal cluster $O_i$ as the fraction of 
points that belong to $O_i$, i.e., the ratio ${|O_i|}/{n}$.
Suppose we only want to get a good approximation for the cost, and that 
we know that all the clusters in the target solution have sufficiently large 
fractional sizes. 
In this case, naive uniform sampling will likely pick a large number of 
samples from each of the clusters. 
This observation, together with Lemma~\ref{lem:approx-centroid}, allows us 
to approximate the centroid and the cost of each cluster to any given accuracy.

\begin{lemma}[Lemma 1 and Lemma 2 of Inaba et al.~\cite{IKI94}]
\label{lem:approx-centroid}
  Let $(\epsilon, \delta) \in (0, 1)^2$, let $m \geq {1}/{(\epsilon \delta)}$ 
  be a positive integer, and let  $S = \{ p_1, \ldots, p_m \}$ be a multiset 
  of $m$ i.i.d. samples from the uniform distribution over some finite set 
  $C \subset \R^r$.
  With probability at least 
  $(1 - \delta)$, $\distsq(\mu(S), \mu(C)) \leq \epsilon \cdot 
  {\cost(C, \mu (C))}/{|C|}$ and $\cost(C, \mu(S)) \leq (1 + \epsilon) 
  \cost(C, \mu(C))$.
\end{lemma}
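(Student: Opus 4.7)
The plan is to prove the two bounds by combining a variance calculation with Markov's inequality and the classical ``parallel axis'' identity for sums of squared distances.

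First I would compute $\E[\distsq(\mu(S), \mu(C))]$ when $S$ consists of $m$ i.i.d.\ uniform samples from $C$. Writing $\mu(S) - \mu(C) = (1/m) \sum_{i=1}^m (p_i - \mu(C))$ and expanding the squared norm coordinate-wise, the cross terms vanish in expectation by independence and because each $p_i$ has mean $\mu(C)$. What remains is $(1/m^2) \sum_{i=1}^m \E[\|p_i - \mu(C)\|^2] = \E[\|p_1 - \mu(C)\|^2]/m$. Since $p_1$ is uniform on $C$, this expectation equals $\cost(C, \mu(C))/|C|$. Hence
\[
\E\bigl[\distsq(\mu(S), \mu(C))\bigr] \;=\; \frac{\cost(C, \mu(C))}{m \cdot |C|}.
\]

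Second, I would apply Markov's inequality. Setting the threshold to $\epsilon \cdot \cost(C, \mu(C))/|C|$, the probability of exceeding it is at most $1/(m\epsilon) \leq \delta$ by the assumption $m \geq 1/(\epsilon \delta)$. This gives the first claim immediately.

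For the second claim, I would invoke the ``shift-the-center'' identity: for any $y \in \R^r$,
\[
\cost(C, y) \;=\; \cost(C, \mu(C)) + |C| \cdot \distsq(y, \mu(C)),
\]
which follows by expanding $\|p - y\|^2 = \|p - \mu(C)\|^2 + 2\langle p - \mu(C), \mu(C) - y\rangle + \|\mu(C) - y\|^2$, summing over $p \in C$, and noting that the linear cross term vanishes by the definition of $\mu(C)$. Plugging $y = \mu(S)$ and using the bound from the first claim yields $\cost(C, \mu(S)) \leq (1 + \epsilon) \cost(C, \mu(C))$ on the same good event, so both statements hold simultaneously with probability at least $1 - \delta$.

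There is no real obstacle: the entire argument is a variance/Markov computation together with the standard parallel axis identity. The only thing to be a bit careful about is making sure both conclusions are derived from the \emph{same} high-probability event (the bound on $\distsq(\mu(S), \mu(C))$), so that a single union bound is not needed and the probability $1 - \delta$ covers both statements jointly.
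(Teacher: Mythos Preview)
Your proof is correct and is precisely the standard argument of Inaba et al.\ that the paper cites; the paper itself does not reprove the lemma but simply imports it. The variance computation plus Markov for the first bound, followed by the parallel-axis identity $\cost(C,y)=\cost(C,\mu(C))+|C|\cdot\distsq(y,\mu(C))$ for the second, is exactly the intended route, and your observation that both conclusions follow from the single event $\distsq(\mu(S),\mu(C))\le \epsilon\cdot\cost(C,\mu(C))/|C|$ is the right way to get the joint $1-\delta$ guarantee.
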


However, the above approach fails when some clusters in the optimal target 
solution contribute  significantly to the cost, but have small fractional 
sizes (that is because uniform sampling is not guaranteed to pick sufficient 
numbers of samples from the small clusters).
Ailon et al.~\cite{ABJK18} circumvented this issue with an algorithm 
that  iteratively approximates the centers of the clusters using a 
distance-based probability distribution ($D^2$-sampling). 
We will refer to their algorithm as $\calA^\ast$.

Note that when it comes to accuracy, we can totally disregard clusters 
with small fractional sizes; we only have to correctly label a sufficiently 
large fraction of the points in large clusters. 
With this intuition, we present the outline of our algorithm. 

Let $(P, \R^r, \dist)$ be a $k$-means instance in Euclidean space that has 
no boundary points. 
For simplicity, we refer to the instance $(P, \R^r, \dist)$ by just 
$P$ where possible, as for Euclidean $k$-means, the other two parameters 
are fixed.
We start with a naive uniform sampling step that gives a good approximation 
for the centers of large clusters. 
Starting with these centers, we run a slightly modified version of 
algorithm $\calA^\ast$ to approximate the centers of the remaining small 
clusters. 
Thus, at this point, we have a clustering with a good cost and we know which 
clusters are large.
We now run the learning algorithm $\calA_L$ on input $P$ 
and obtain a labeling of the points. 
For each point, we assign its final label based on \begin{enumerate}
\item the label assigned to it by the learning algorithm $\calA_L$, and
\item its proximity to large cluster centers.
\end{enumerate}
In particular, if the output of $\calA_L$ decides that a point $p$ should be 
in some large cluster $i$, and if $p$ is sufficiently close to the approximate 
center for cluster $i$, we label it according to the learning output; 
otherwise, we label it according to its nearest approximate center.
We show that this approach retains a cost that is close to the cost of the 
clustering output by $\calA^\ast$. 
The accuracy guarantee comes from the facts that a large fraction of the 
points are sufficiently close to the centers of large clusters, and that 
$\calA_L$ labels most of them correctly with a good probability. 

We now review the key properties of algorithm $\calA^\ast$ 
(the algorithm of Ailon et al.~\cite{ABJK18}).
Let $0 < \epsilon < 1$. We say a $k$-means instance $P$ is 
$(k, \epsilon)$-irreducible if  no $(k-1)$-means 
clustering gives an $(1 + \epsilon)$-approximation for the $k$-means problem, 
i.e., if $OPT^{k}$ denotes the optimal $k$-means cost of $P$, then $P$ is $(k, 
\epsilon)$-irreducible if  $OPT^{k-1} > (1 + \epsilon) OPT^k$.
Suppose that $P$ is $(k, \epsilon)$-irreducible. Let 
$\calO = \{O_1, \dots, O_k\}$ be the target optimal clustering, and let 
$o_1, \dots, o_k$ be the respective centers.
Let $C_i = \{c_1, \dots, c_i \}$ denote a set of $i$ centers and let 
$Z(i)$ denote the following statement: 
There exists a set of $i$ distinct indices $j_1, \dots, j_i$ 
such that, for all $r \in [i]$, $ \cost(O_{j_r}, c_r) \leq (1 + 
{\epsilon}/{16} )  \cost(O_{j_r}, o_{j_r})$. 
To put it differently, $Z(i)$ says that $C_i$ is a set of good candidate 
centers for $i$-many distinct clusters in the target optimal solution. 
Assuming $P$ is $(k, \epsilon)$-irreducible, the algorithm  $\calA^\ast$  
yields a method to incrementally construct sets 
$C_1, \dots, C_k$ (i.e., $C_{i + 1} = C_i \cup \{c_{i + 1} \}$) such that,  
conditioned on $Z(i)$ being true, $Z(i + 1)$ is true with probability 
at least $(1 - {1}/{k} )$. 
Now suppose that $P$ is $(k, {\epsilon}/{(4k)})$-irreducible. 
Then $\calA^\ast$ gives a $(1 + {\epsilon}/{(4 \cdot 16 k)})$-approximation 
for $k$-means with probability at least $(1 - 1/k)^k \geq 1/4$.
Otherwise, $\calA^\ast$ gives a 
$(1 + {\epsilon}/{(4 \cdot 16 k)})$-approximation 
for the $i$-means problem for some $i < k$, where $i$ is the largest 
integer such that $P$ is $(i, {\epsilon}/{4k})$-irreducible. 
In the latter case, it will give a $(1 + {\epsilon}/{(4 \cdot 16 k)}) 
( 1 + \epsilon/(4k))^{k - i}$-approximation with probability at least $1/4$.
In either case, the output of $\calA^\ast$ is a $(1 + \epsilon)$-approximation.

In our algorithm, we first find the centers of large clusters using 
uniform sampling, and
then run $\calA^\ast$ to find the remaining centers. 
This allows us to know which clusters are large,
which is a crucial information needed for the final labeling.
Suppose that in the target optimal solution we have $k_0 \leq k$ clusters 
whose fractional sizes are at least $\epsilon/k$. 
Note that $k_0$ is at least $1$ due to the Pigeonhole Principle, since at 
least one cluster should have a fractional size of at least $1/k > 
\epsilon/k$. 
By Lemma~\ref{lem:approx-centroid}, using uniform sampling, we can 
approximate the centroid of each of these large clusters with  a good accuracy. 
Hence, we can have a set $C_{k_0}$ of $k_0$ centers such that $Z(k_0)$ is true 
with probability $(1 - \delta)$. 
Afterwards, we use $\calA^\ast$ to incrementally construct 
$C_{k_0 + 1}, \dots, C_{k}$. 
Conditioned on $Z(k_0)$ being true, the output $C_k$ will be a
$(1 + \epsilon)$-approximation with probability 
$(1 - 1/k)^{k - k_0} \geq (1 - 1/k)^{k} \geq 1/4$ 
for $k \geq 2$.
However, by independently running this incremental construction 
$O(\log(1/\delta))$ times and choosing the set of centers with the 
minimum total cost, we can boost this probability to $(1 - \delta)$. 
This observation gives the following generalization of Theorem 10 of 
Ailon et al.~\cite{ABJK18}.

\begin{theorem}
\label{thm:cost-approx}
Consider a Euclidean $k$-means instance $(P, \R^r, \dist)$.  
Let $O_1, \dots, O_k$ be a fixed optimal clustering with respective centers 
$o_1, \dots, o_k$.
Let $k_0 \leq k$ and let $C_{k_0} = \{c_1, \dots, c_{k_0}\}$ be a set of points 
such that, with probability at least $p_0$,  
$\cost(O_i, c_i) \leq (1 + {\epsilon}/{(64 k)}) \cost(O_i, o_i)$ 
for all $i \in [k_0]$. There exists an algorithm 
$\calA_{cost}$ that, given $P$, $C_{k_0}$, and 
parameters $(\epsilon, \delta) \in (0, 1)^2$ as 
input, outputs a set of centers $C_{k} = C_{k_0} \cup \{c_{k_0 + 1}, 
\dots, c_k \}$ such that 
$\sum_{i \in [k]} \cost(O_i, c_i) \leq (1 + \epsilon) \sum_{i \in [k]} 
\cost(O_i, o_i)$ with probability at least $p_0(1 - \delta).$ 
Moreover, $\calA_{cost}$ uses 
$O(({k^{9}}/{\epsilon^4}) \log({1}/{\delta}))$ 
same-cluster queries and runs in time $O(nr({k^{9}}/{\epsilon^4})
\allowbreak \log({1}/{\delta}))$.
\end{theorem}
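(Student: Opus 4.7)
The plan is to apply the algorithm $\calA^\ast$ of Ailon et al.~\cite{ABJK18} as a black box, seeded with the given initial set of centers $C_{k_0}$ rather than with an empty set. The correctness and query bound will both follow from reusing the analysis of $\calA^\ast$ that the paper has already summarized before the theorem statement; the only genuinely new ingredient is the probability-boosting step that lets us replace the constant success probability of $\calA^\ast$ with the desired $(1-\delta)$.

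First, I would condition on the event $E_0$ that $C_{k_0}$ satisfies $\cost(O_i,c_i)\le(1+\epsilon/(64k))\cost(O_i,o_i)$ for $i\in[k_0]$, which holds with probability at least $p_0$ by hypothesis. In the language of $\calA^\ast$, this is precisely the statement $Z(k_0)$ (with $\epsilon/(4k)$ playing the role of $\epsilon$ there, since $\epsilon/(64k) = (\epsilon/(4k))/16$). I would then run the incremental step of $\calA^\ast$ exactly $k-k_0$ times, at each step drawing a new candidate center by $D^2$-sampling from $P$ with respect to the current center set. The analysis recapped in the paragraph preceding the theorem shows that, conditioned on $Z(i)$, one incremental step preserves the property and yields $Z(i+1)$ with probability at least $1 - 1/k$, provided the residual instance is $(i+1,\epsilon/(4k))$-irreducible. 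Chaining these $k-k_0$ steps gives success probability at least $(1-1/k)^{k-k_0} \ge (1-1/k)^k \ge 1/4$ for $k\ge 2$ (the small-$k$ corner case is trivial).

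To handle the non-irreducible case, I would appeal verbatim to the argument the excerpt already sketches: if $P$ fails to be $(i,\epsilon/(4k))$-irreducible at some step, then collapsing that cluster costs at most a factor of $(1+\epsilon/(4k))$, and the total degradation over the at most $k$ such collapses is bounded by $(1+\epsilon/(64k))(1+\epsilon/(4k))^k \le 1+\epsilon$ for small enough $\epsilon$, using $(1+x)^k \le 1+2kx$ for $kx\le 1$. Thus in either case, conditional on $E_0$ and on the $D^2$-sampling steps succeeding, the output $C_k$ is a $(1+\epsilon)$-approximation.

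To amplify the constant success probability $1/4$ to $1-\delta$, I would run this seeded version of $\calA^\ast$ independently $t = \Theta(\log(1/\delta))$ times, each time producing a candidate $C_k^{(j)}$, and output the one minimizing $\sum_i \min_{c\in C_k^{(j)}}\distsq(p_i,c)$ (this evaluation needs no oracle queries). The probability that all $t$ trials fail is at most $(3/4)^t \le \delta$, so overall the guarantee holds with probability at least $p_0(1-\delta)$. For the query complexity, each invocation of $\calA^\ast$ uses $O(k^9/\epsilon^4)$ same-cluster queries as shown in~\cite{ABJK18}, and the cost evaluation is query-free, so the total is $O((k^9/\epsilon^4)\log(1/\delta))$ queries with matching running time $O(nr(k^9/\epsilon^4)\log(1/\delta))$. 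The only subtle point I expect to need care with is verifying that seeding $\calA^\ast$ with $C_{k_0}$ instead of building it from scratch truly reduces to $Z(k_0)$ in their inductive framework; this should be immediate from the definition of $Z$, but is the one place where the proof is not pure black-box invocation.
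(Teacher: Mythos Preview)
Your proposal is correct and follows essentially the same argument the paper gives in the discussion immediately preceding the theorem: condition on $Z(k_0)$, run the incremental step of $\calA^\ast$ from there to get constant success probability $\ge 1/4$, handle reducibility exactly as in~\cite{ABJK18}, and boost to $(1-\delta)$ by repeating $O(\log(1/\delta))$ times and selecting the minimum-cost output. The paper treats this theorem as a direct corollary of that discussion rather than giving a separate formal proof, and your write-up fleshes out precisely those steps.
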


Theorem~\ref{thm:cost-approx} implies a method to get a good approximation for 
the cost that also reveals which clusters are large. 
Specifically, we first perform uniform sampling over the whole set $P$ and 
approximate the centers of the large clusters. 
If we get a sufficient number of samples, the approximate centers will satisfy 
the precondition of Theorem~\ref{thm:cost-approx} with a good probability. 
Thus, using algorithm $\calA_{cost}$ from Theorem~\ref{thm:cost-approx}, 
we get the 
desired approximation for the cost. 
What remains now is to use PAC learning and to appropriately label the points 
according to the learning outcome. 

\begin{algorithm}
  \SetKwInOut{Input}{Input}\SetKwInOut{Output}{Output}
  \Input{Point set $P \subset \R^r$, the oracle access to $f_\calO$, 
  parameter $k$, accuracy parameter $\epsilon$, failure probability $\delta$, 
  and algorithms $\calA_{cost}$ and $\calA_L$.}
  \Output{The clustering $\hat{\calO} = \{\hat{O}_1, \dots, \hat{O}_k\}$ 
                defined by the labeling 
	  $f_{\hat{\calO}}: P \to [k]$ computed below. For each $i \in [k]$, 
	  the respective	cluster center $\hat{o}_i$ is the centroid of $\hat{O}_i$.}
  \BlankLine
  Draw $Q_1(k,\epsilon,\delta)$ samples from $P$ independently and uniformly 
  at random,
  and query $f_\calO$ to get their true cluster labels in $\calO$. 
  Denote the set of sampled points by $S$, and for all $i \in [k]$, denote 
  the set of sampled points that belong to class $i$ by $S_i$. 
  \label{step:unif-sample}\\
  Let $k'$ be the number of distinct cluster labels with more than 
  $ ({\epsilon}/{(2 k)})Q_1(k,\epsilon,\delta) $ samples.
  Let $C_{k'} := \{ \mu(S_i) : |S_i| >  ({\epsilon}/{(2k)}) 
   Q_1(k,\epsilon,\delta) \}$. 
  Without loss of generality, assume that the class labels for centers  
  in $C_{k'}$ are $1, \ldots, k'$. \label{step:large-centers}\\
  Run the algorithm $\calA_{cost}$, starting from $C_{k'}$ as the partial
  set of centers.  
  This takes $Q_2(k,\epsilon,\delta)$ more queries. 
  Let $C_k = \{c_1, \dots, c_k\}$ be the output, and let $OPT^\ast$ be the 
  cost of the clustering obtained by assigning each point to its nearest $c_i$.
  \label{step:approx-label}\\
  Use the PAC learning algorithm $\calA_L$ on $Q_3(k, r, 
  {\epsilon^4}/{k}, 
  \delta)$ uniform i.i.d. samples from $P$ to learn a classifier for the $k$ 
  classes that is  $(1-{\epsilon^4}/{k} )$-accurate with 
  probability at least $(1 - \delta)$. 
  Let $H_1, \ldots, H_{k}$ be the sets of points that are  labeled 
  $1, \dots, k$ respectively by the classifier. 
  \label{step:pac} \\
  Output the clustering $\hat{\calO}$ defined by the following labeling 
  function: for each $i \in [k']$ and $p \in H_i$ such that 
  $\distsq(p, c_i) \le {k OPT^\ast}/{(n \epsilon^3)}$, set 
  $f_{\hat{\calO}}(p) = i$.  
  For any other point $p$, set $f_{\hat{\calO}}(p) = i$ if the nearest 
  cluster center to $p$ in $C_k$ is $c_i$.
  \label{step:recolor}
  \caption{Algorithm whose  query complexity is independent of $n$}
  \label{alg:indep-n}
\end{algorithm}

We present the pseudo-code of our algorithm in Algorithm~\ref{alg:indep-n}, 
where we use the algorithm $\calA_{cost}$ from Theorem~\ref{thm:cost-approx} 
and the learning algorithm $\calA_L$ guaranteed by Theorem 
\ref{thm:learning-alg-euc}.  
In Algorithm~\ref{alg:indep-n}, 
$Q_1(k, \epsilon,\delta)={256 k^3}/{(\epsilon^2 \delta)}$ is
the number of samples needed to ensure that we pick a sufficient number of 
samples from each of the clusters with fractional sizes of at least 
${\epsilon}/{k}$,
$Q_2(k,\epsilon,\delta)$ is the sample complexity of the algorithm 
$\calA_{cost}$, and $Q_3(k,r,\epsilon,\delta)= m(\R^r,\epsilon,\delta)$ 
is the sample complexity of the learning algorithm $\calA_L$ for an  
error $\epsilon$ and
a failure probability $\delta$. 
As with Algorithm~\ref{alg:simple-alg}, the center that a point is assigned 
to in the final output may not be the closest center to that point.

We now prove that, with probability at least $(1 - \delta)$, the output of Algorithm~\ref{alg:indep-n} is $(1 + \epsilon)$-approximate and $(1 - \epsilon)$-accurate.

Assume $0 < \epsilon < \txtfrac{1}{4}$. 
For an optimal cluster $O_i$ with center $o_i$, denote by 
$\avgdistsq(O_i):=\txtfrac{\cost(O_i, o_i)}{|O_i|}$ the average squared distance from 
the points in $O_i$ to their center $o_i$.  
Let $c_1, \dots, c_{k'}$ be the points in $C_{k'}$ where $c_i$ is the approximate 
centroid for the cluster with label $i$ found in Step~\ref{step:large-centers} of 
Algorithm~\ref{alg:indep-n}.
  
First, we show that Step~\ref{step:large-centers} of Algorithm~\ref{alg:indep-n} approximates all the large cluster centers accurately enough 
so that the precondition for applying Theorem~\ref{thm:cost-approx} is 
satisfied (recall that the precondition is to have a set of $k' < k$ 
approximate centers for $k'$ distinct optimal clusters in the target 
solution).
Lemma~\ref{lem:precond} ensures this, and also makes sure that the clusters 
that are too small are not picked as large clusters. 
This last fact is useful in the proof of Lemma~\ref{lem:acc-and-cost}.

\begin{lemma} \label{lem:precond}
With probability at least $1-\delta$, the following statements are true:
\begin{enumerate}
	\item For all $i \in [k']$, $\distsq(o_i, c_i) \leq 
	(\txtfrac{\epsilon}{(64k)}) \avgdistsq(O_i)$.
	\item For all $i \in [k']$, $\cost(O_i, c_i) \leq \left(1 + 
	\txtfrac{\epsilon}{(64k)} \right) \cost(O_i, o_i)$.
	\item Let $L = \{i \in [k]: |O_i| \geq (\txtfrac{\epsilon}{k})n \}$ 
	be the set of the labels of the optimal clusters with a fractional 
    size of at least $\txtfrac{\epsilon}{k}$. 
    Then, $L \subseteq [k']$.
	\item Let $T = \{i \in [k]: |O_i| \leq (\txtfrac{\epsilon^2}{k})n \}$ be 
	the set of the labels of the optimal clusters with fractional size 
	of at most $\txtfrac{\epsilon^2}{k}$. Then $T \cap [k'] = \emptyset$.
\end{enumerate}
\end{lemma}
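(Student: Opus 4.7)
The plan is to view $|S_i|$ as a $\mathrm{Binomial}(Q_1, \pi_i)$ random variable, where $\pi_i = |O_i|/n$ is the fractional size of cluster $O_i$, and to combine standard concentration bounds with Lemma~\ref{lem:approx-centroid}. Throughout, $Q_1 = 256k^3/(\epsilon^2\delta)$ and the selection threshold for membership in $[k']$ is $(\epsilon/(2k))\,Q_1 = 128 k^2/(\epsilon\delta)$.

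First I would settle the classification claims (3) and (4) by Chernoff bounds. For $i \in L$ we have $\mathbb{E}[|S_i|] = \pi_i Q_1 \geq 256 k^2/(\epsilon\delta)$, which is twice the threshold, so a multiplicative lower-tail bound gives failure probability $\exp(-\Omega(k^2/(\epsilon\delta)))$ per cluster, easily absorbable by a union bound over at most $k$ clusters. For $i \in T$ we have $\mathbb{E}[|S_i|] \leq 256 k^2/\delta$, and the threshold exceeds the mean by a factor of $a := 1/(2\epsilon) \geq 2$ (using $\epsilon < 1/4$); the Chernoff upper tail $\Pr[X \geq a\mu] \leq (e/a)^{a\mu}$ then yields a bound of order $(2e\epsilon)^{128k^2/(\epsilon\delta)}$, which is again vanishingly small. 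Clusters with intermediate fractional size $\epsilon^2/k < \pi_i < \epsilon/k$ may end up inside or outside $[k']$, but the lemma asserts nothing about them.

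Next I would establish the accuracy claims (1) and (2). Conditioned on $|S_i| = m$, the points in $S_i$ are i.i.d. uniform draws from $O_i$, so Lemma~\ref{lem:approx-centroid} applies directly (using that the Euclidean centroid $\mu(O_i)$ equals $o_i$). For any selected $i \in [k']$, the selection criterion forces $m \geq 128 k^2/(\epsilon\delta)$, which is exactly $1/(\epsilon'\delta')$ for the choice $\epsilon' = \epsilon/(64k)$ and $\delta' = \delta/(2k)$. The lemma then simultaneously delivers $\distsq(o_i, c_i) \leq (\epsilon/(64k))\,\avgdistsq(O_i)$ and $\cost(O_i, c_i) \leq (1 + \epsilon/(64k))\,\cost(O_i, o_i)$ with the required per-cluster failure probability.

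A final union bound over the at most $k$ clusters participating in each of the three failure modes --- exclusion of a large cluster, inclusion of a tiny cluster, and inaccurate centroid estimate for a selected cluster --- yields total failure probability at most $\delta$ once the per-cluster tolerances are allotted proportionally. The main delicate step is the small-cluster upper-tail Chernoff, since it is a multiplicative rather than additive deviation; the redeeming feature is that the ratio $1/(2\epsilon)$ between threshold and mean is comfortably bounded away from $1$ under the standing assumption $\epsilon < 1/4$, so no sharpening of the standard bound is required and the choice $Q_1 = 256 k^3/(\epsilon^2\delta)$ is seen to be tight enough for every inequality to hold with room to spare.
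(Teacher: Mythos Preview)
Your proposal is correct and follows essentially the same approach as the paper: Chernoff lower- and upper-tail bounds for statements~(3) and~(4), Lemma~\ref{lem:approx-centroid} with parameters $\epsilon' = \epsilon/(64k)$ and $\delta' = \delta/(2k)$ for statements~(1) and~(2), and a union bound over at most $k$ clusters in each category. The only cosmetic differences are that the paper uses the simpler Chernoff form $\Pr[X_i > 2\,\E[X_i]]$ for statement~(4) (after noting $\epsilon/(2k) > 2\epsilon^2/k$ when $\epsilon < 1/4$) rather than your $(e/a)^{a\mu}$ bound, and it groups the failure budget as $\delta/2$ for (1)--(2) and $\delta/2$ for (3)--(4) rather than three separate modes; your explicit remark that conditioned on $|S_i|=m$ the samples are i.i.d.\ uniform in $O_i$ is a point the paper leaves implicit.
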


\begin{proof}
Notice that for each $i\in[k']$, we have at least 
$(\txtfrac{\epsilon}{(2k)}) Q_1(k, \epsilon, \delta) = \txtfrac{128k^2}{(\epsilon \delta)}$ 
samples in Step~\ref{step:large-centers}. 
Thus, using  Lemma~\ref{lem:approx-centroid} on each cluster with error 
parameter $\txtfrac{\epsilon}{(64k)}$ and failure probability $\txtfrac{\delta}{(2k)}$, 
and applying the union bound, the first two statements hold with probability 
at least $(1 - \txtfrac{\delta}{2} )$. 

As for the final two statements, note the following.  
Let $q =  Q_1(k, \epsilon, \delta)$, and let the random variables 
$X_{i, j}: j = 1, \dots, q$ be such that $X_{i, j} = 1$ if the $j$-th sample 
is from $O_i$. Otherwise, $X_{i, j} = 0$. 
Let $X_i = \sum_{j \in [q]} X_{i,j}$ be total number of 
samples picked from $O_i$ and $p_i = \Pr[X_{i,j} = 1]$. 
Since we pick identical samples, this probability does not depend on $j$.

If $i \in L$, then $p_i \geq \txtfrac{\epsilon}{k}$ and 
$\E[X_i] = q \cdot p_i \geq \txtfrac{q \cdot \epsilon}{k}$.  
Applying a standard Chernoff bound, we get
\begin{align*}
\Pr \left[ X_i < \frac{1}{2} \cdot \frac{\epsilon}{k} q\right]  
\leq \Pr \left[ X_i < \frac{1}{2} \E[X_i] \right] 
\leq   \exp \left(-\frac{ q \cdot \epsilon}{2^2 \cdot 3 \cdot k} \right) 
\leq  \exp \left(-\frac{32 k^2}{3 \epsilon \delta} \right) \leq \frac{\delta}{2k}.
\end{align*}
For $i \in T$, observe that $p_i \leq \txtfrac{\epsilon^2}{k}$ and 
$\E[X_i] = q \cdot p_i \leq \txtfrac{q \cdot \epsilon^2}{k}$. 
Since $\epsilon < \txtfrac{1}{4}$, we further have that 
$\txtfrac{\epsilon}{(2k)} > 2\txtfrac{\epsilon^2}{k}$. 
Applying another standard Chernoff bound, we now get
\begin{align*}
\Pr \left[ X_i > \frac{1}{2} \cdot \frac{\epsilon}{k} q \right] 
& \leq \Pr \left[ X_i > 2 \cdot \frac{\epsilon^2}{k} q \right] 
\leq   \Pr \left[ X_i > 2  \E[X_i] \right] \\  
&\leq \exp \left(-\frac{ q \cdot \epsilon^2}{2^2 \cdot 3 \cdot k} \right) 
\leq  \exp \left(-\frac{32 k^2}{3  \delta} \right) \leq \frac{\delta}{2k}.
\end{align*}

Applying the union bound, we get that the final two statements are also true 
with probability at least $\left(1 - \txtfrac{\delta}{2}\right)$.  
Thus, all four statements are true with probability at least $(1 - \delta)$.
\end{proof}

Recall that $OPT^\ast$ is the cost of the clustering defined by the centers 
$C_k$ after Step~\ref{step:approx-label} of Algorithm~\ref{alg:indep-n}. 
Lemma~\ref{lem:precond}, together with Theorem~\ref{thm:cost-approx}, implies 
that, with probability at least $(1 - 2 \delta)$, 
$$ OPT^\ast \leq \sum_{i \in [k]} \cost(O_i, c_i) \leq (1 + \epsilon)\sum_{i \in [k]} 
\cost(O_i, o_i) \leq (1 + \epsilon) OPT.$$
Also note that the PAC learning of Step~\ref{step:pac}  has at most 
$\txtfrac{\epsilon^4}{k}$ error  with probability at least $(1 - \delta)$. 

The following lemma thus provides the final ingredient of our proof. 
Conditioned on Steps~\ref{step:unif-sample}, \ref{step:large-centers}, 
\ref{step:approx-label}, and~\ref{step:pac} of Algorithm~\ref{alg:indep-n} 
being successful,  Lemma~\ref{lem:acc-and-cost}  shows that the clustering 
defined in Step~\ref{step:recolor} achieves the desired accuracy of the 
algorithm with only a marginal increase in the cost.

\begin{lemma}
\label{lem:acc-and-cost}
Consider an outcome of Algorithm~\ref{alg:indep-n} where all four statements 
of Lemma~\ref{lem:precond} are true. 
Additionally, suppose that 
$ \sum_{i \in [k]} \cost(O_i, c_i) \leq (1 + \epsilon) OPT$, 
and that the PAC learning of Step~\ref{step:pac} of Algorithm~\ref{alg:indep-n} 
has at most $\txtfrac{\epsilon^4}{k}$ error.
Then, the output clustering $\hat{\calO}$ 
is $(1 - 6 \epsilon)$-accurate with respect to the target optimal clustering $\calO$
and $(1 + 3 \epsilon)$-approximates the $k$-means cost.
\end{lemma}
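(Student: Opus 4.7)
The plan is to analyze accuracy and cost separately, using the two-rule structure of Step~\ref{step:recolor} as the organizing principle. For each $i \in [k']$ I would split $O_i$ into its \emph{close} part $N_i = \{p \in O_i : \distsq(p, c_i) \le kOPT^\ast/(n\epsilon^3)\}$ and \emph{far} part $F_i = O_i \setminus N_i$. A Markov-style computation, using the hypothesis $\sum_{i \in [k]} \cost(O_i, c_i) \le (1+\epsilon)OPT \le (1+\epsilon)OPT^\ast$ (and $OPT \le OPT^\ast$), yields $\sum_{i \in [k']} |F_i| \le (1+\epsilon) n\epsilon^3/k$. A second small set comes from Lemma~\ref{lem:precond}(3), which ensures that the total number of points in the small clusters $O_i$ with $i \notin [k']$ is at most $\epsilon n$.

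For accuracy, I would lower-bound the number of correctly labeled points in $\hat{\calO}$. Every $p$ that lies in some $N_i$ with $i \in [k']$ and is labeled $i$ by $\calA_L$ satisfies both conditions of the first rule of Step~\ref{step:recolor}, and is therefore correctly labeled $i$ in $\hat{\calO}$. The points that fail this description are exactly the far points in large clusters, the points in small clusters, and the PAC mistakes inside the $N_i$, totalling at most $(1+\epsilon)n\epsilon^3/k + \epsilon n + \epsilon^4 n/k$ points. Adding the at-most-$\epsilon^4 n/k$ mislabels that rule~A itself introduces gives the final error count, which comfortably fits inside $6\epsilon n$ for $\epsilon < 1/4$.

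For the cost, I would start from $\cost(\hat{\calO}) = \sum_i \cost(\hat{O}_i, \hat{o}_i) \le \sum_i \cost(\hat{O}_i, c_i) = \sum_p \distsq(p, c_{f_{\hat{\calO}}(p)})$, using that $\hat{o}_i$ is the centroid of $\hat{O}_i$. I then partition $P$ by the rule applied in Step~\ref{step:recolor}: each rule-B point contributes at most $\distsq(p, c_{f_\calO(p)})$ because its assigned center is the nearest in $C_k$; each rule-A-correct point contributes exactly $\distsq(p, c_{f_\calO(p)})$; together these two sums are bounded by $\sum_i \cost(O_i, c_i) \le (1+\epsilon)OPT$. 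The at-most-$\epsilon^4 n/k$ rule-A mislabels contribute at most $kOPT^\ast/(n\epsilon^3)$ each, for a total of at most $\epsilon OPT^\ast$. Summing, $\cost(\hat{\calO}) \le (1+\epsilon)OPT + \epsilon OPT^\ast \le (1+\epsilon)^2 OPT \le (1+3\epsilon)OPT$.

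The main obstacle, as I see it, is keeping the bookkeeping straight: aligning the optimal cluster indices $1,\dots,k$ with the centers $c_1,\dots,c_k$ (via the WLOG relabeling in Step~\ref{step:large-centers} and the $Z(k)$ correspondence for the centers produced by $\calA_{cost}$), and tracking the interplay between the PAC error budget $\epsilon^4/k$ and the distance threshold $kOPT^\ast/(n\epsilon^3)$. The threshold is really the balancing parameter of the whole argument: it is small enough that the contribution of rule-A mislabels is absorbed into $\epsilon OPT^\ast$, yet large enough that all but a $O(\epsilon^3/k)$ fraction of each large cluster lies inside it, which is what lets PAC correctness propagate to the final labeling.
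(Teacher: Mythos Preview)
Your proposal is correct and mirrors the paper's argument closely: the cost analysis (rule-B points cannot increase the cost over $\sum_i \cost(O_i,c_i)$; rule-A mislabels contribute at most $\epsilon\, OPT^\ast$) is identical to the paper's, and the accuracy analysis uses the same three-way decomposition into small-cluster points, PAC errors, and far points. The one noteworthy difference is your treatment of the far set: you apply Markov directly to $\distsq(p,c_i)$ against the threshold $kOPT^\ast/(n\epsilon^3)$ and obtain $\sum_{i\in[k']}|F_i|\le (1+\epsilon)n\epsilon^3/k$, whereas the paper routes through statements (1) and (4) of Lemma~\ref{lem:precond} to convert the threshold on $\distsq(p,c_i)$ into a lower bound on $\distsq(p,o_i)$ and then applies Markov per cluster, obtaining only the looser $4\epsilon n$. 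Your version is simpler, sharper, and in fact does not need parts (1) and (4) of Lemma~\ref{lem:precond} at all. One minor redundancy: the extra $\epsilon^4 n/k$ you add for ``mislabels that rule~A itself introduces'' is unnecessary, since any such point is already a PAC mistake and hence already counted in your $|P\setminus G|$ bound; but this over-count is harmless and the total still sits comfortably below $6\epsilon n$.
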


\begin{proof}

It is easy to see that the points that are incorrectly labeled by 
$f_{\hat{\calO}}$ are fully contained inside the union of the following 
three disjoint sets:
\begin{enumerate}
\item Points that belong to the target optimal clusters whose fractional 
sizes are at most $\txtfrac{\epsilon}{k}$. 
\item Points that are incorrectly labeled by the clustering algorithm in Step~\ref{step:pac} but are not in the set defined above.
\item Points that do not satisfy the distance criterion in Step~\ref{step:recolor} but are correctly classified and are assigned  a label from 
$\{1, \dots, k'\}$ by the PAC learning output in Step~\ref{step:pac}.
\end{enumerate}

At most a $k (\txtfrac{\epsilon}{k}) = \epsilon$ fraction of the points are in the 
first set.
By the accuracy of the PAC learning step, the second set has at most 
$\txtfrac{\epsilon^4}{k}$ fraction of the points. 
As for the third set, we consider the following:
By  the fourth statement of Lemma~\ref{lem:precond}, 
for $i = 1, \dots, k'$, $O_i$ has a fractional size of at least 
$\txtfrac{\epsilon^2}{k}$; therefore,  
$\txtfrac{\cost(O_i, o_i)}{(n (\epsilon^2 / k))} \geq  \avgdistsq(O_i)$.  
Recall that $H_i$ is the set of points that are assigned label $i$ 
in the PAC learning step. Let $i \in \{1, \dots, k' \}$, and let $p \in H_i$ 
be a point that is correctly labeled in the PAC learning output and that 
satisfies $\distsq(p, c_i) > \txtfrac{k OPT^\ast}{(n \epsilon^3)} $. 
For such a point $p$, we can lower bound the distance to its approximate 
center at follows:\begin{align*}
\distsq(p, c_i) &> \frac{k OPT^\ast}{n \epsilon^3} \geq \frac{k OPT}{n \epsilon^3} \geq
 \frac{k \cost(O_i, o_i)}{n \epsilon^3} \geq  \frac{\avgdistsq(O_i)}{\epsilon}.
\end{align*}
Also, since $2 \distsq(p, o_i) + 2 \distsq(o_i, c_i) \geq \distsq(p, c_i)$, 
and  $\distsq(o_i, c_i) \leq (\txtfrac{\epsilon}{(64k)}) \avgdistsq(O_i)$ by 
Lemma~\ref{lem:precond} (1), we get
$$ \distsq(p, o_i) > \frac{1}{2} \left( \frac{\avgdistsq(O_i)}{\epsilon} - 
2 \distsq(o_i, c_i) \right) >  \frac{1}{2} \left( \frac{\avgdistsq(O_i)}{\epsilon} - 
2 \frac{\epsilon \avgdistsq(O_i)}{64k} \right)  > \frac{\avgdistsq(O_i)}{4 \epsilon}.$$ 
Hence, there can be at most $4 \epsilon |O_i|$ such points. 
Summing over all $O_i$, we conclude that at most $4 \epsilon$ fraction 
of the points are in the final set.
Hence, the resulting clustering is at least $(1 - 6\epsilon)$ accurate.

It remains to show that the cost increase in Step~\ref{step:recolor} is small. 
Let $ \cost' = \sum_{i \in [k]} \cost(O_i, c_i)$ be the cost of assigning each 
optimal cluster $O_i$ to its approximate center $c_i$. 
We know $cost'$ is close to the optimal cost by Theorem~\ref{thm:cost-approx}. 
We  now show that the label assignment in Step~\ref{step:recolor} does not increase 
this cost by too much.
Observe that for the points that are not labeled according to the PAC learning output, 
the contribution to $\cost'$ can only decrease by assigning it to the nearest $c_i$. 
For the points that are labeled according to the PAC output, if the learning 
algorithm has assigned them the correct labels, then there is no change in 
their contribution to $\cost'$.

Thus, it is sufficient to bound the cost of those points that are 
incorrectly labeled by the PAC learning and are labeled according to the 
learning output in the final assignment. 
There can be at most $(\txtfrac{\epsilon^4}{k}) n$ such points due 
to the learning accuracy, and assigning such a point $p$ to $c_i$ can 
incur at most an extra $\distsq(p, c_i)$ cost. 
However, due to the distance constraint of Step~\ref{step:recolor} of 
Algorithm~\ref{alg:indep-n}, $\distsq(p, c_i) \leq \txtfrac{k OPT^\ast}{(n \epsilon^3)}$. 
Therefore, the total cost increase by these points is at most 
$$\frac{\epsilon^4  n}{k} \frac{k OPT^\ast}{n \epsilon^3} \leq \epsilon 
OPT^\ast \leq  \epsilon (1 + \epsilon) OPT \leq 2 \epsilon OPT.$$
Consequently, the cost of the output clustering of the algorithm is at most 
$(1 + 3 \epsilon) OPT$.
\end{proof}

Combining Lemma~\ref{lem:precond}, Theorem~\ref{thm:cost-approx}, and Lemma~\ref{lem:acc-and-cost}, we have the proof of Theorem~\ref{thm:indep-n}.
We need each of the Steps~\ref{step:unif-sample}, \ref{step:approx-label}, 
and~\ref{step:pac} to succeed with probability at least $(1 - \txtfrac{\delta}{3})$ 
so that we get a final failure probability of $\delta$, which
can be easily achieved by replacing the probability parameter $\delta$ of Algorithm~\ref{alg:indep-n} with $\delta/3$. 
Furthermore, according to the statement of Lemma~\ref{lem:acc-and-cost}, we also 
need to replace the accuracy parameter $\epsilon$ by $\txtfrac{\epsilon}{6}$. 
As for the claim on the query complexity, we recall that we only need $O(k)$ 
same-cluster queries per single $f_\calO$ query, and Algorithm~\ref{alg:indep-n} 
makes a total number of $Q_1(k,\epsilon,\delta) + Q_2(k,\epsilon,\delta) 
+ Q_3(k, r, \txtfrac{\epsilon^4}{k}, \delta)$ queries to $f_\calO$.


\section{Lower Bounds for \texorpdfstring{$\mathbf{(1 - \epsilon)}$}{(1 - epsilon)}-Accurate \texorpdfstring{$\mathbf{k}$}{k}-Means }

\label{sec:lbs}

The results we presented in Section~\ref{sec:simplealg} and Section~\ref{sec:alg2}
uses PAC learning to achieve $(1-\epsilon)$-accuracy. 
Considering the class of Euclidean $k$-means instances, the query complexities 
of both our results have a linear dependency on the dimension of the Euclidean space. 
This is in contrast to the query complexities of the algorithms in 
Ashtiani et al.~\cite{AKB16}, which had strong assumptions on the input, and Ailon et al.~\cite{ABJK18}, 
which was only aiming to approximate the optimal cost.
In this section, we argue that the linear dependency on dimension is 
necessary for accuracy for Euclidean $k$-means instances. 
This result, as shown in the latter half of this section, also implies that 
the dependency on $\log|Q|$ is necessary for the $k$-means instances in finite 
metric spaces, where $Q$ is the set of candidate centers. 

Let $r>0$ be an integer, and $e_1, \ldots, e_r$ be the standard basis in $\R^r$.
Consider the $2$-means instance $(P, \R^r, \dist)$ in $r$-dimensional Euclidean space,
where $P =\left\{-e_1, e_1, \ldots, -e_r, e_r \right\}$. There are $2r$ points in $P$.
In Lemma~\ref{lem:opt-clust}, we show that any optimal solution for this instance 
contains exactly one of $\pm e_i$ in each optimal cluster for all $i=1, \ldots, r$. 
Hence, there are $2^r$ different optimal solutions.
This means that, without querying at least one of $\pm e_i$, it is 
information-theoretically impossible to know which cluster $e_i$ and
$-e_i$ each belong to, and thus to achieve constant classification error,
the query complexity must be linear in $r$. This proves the first claim of
Theorem~\ref{thm:lbs}.

\begin{lemma}
\label{lem:opt-clust}
Consider the $k$-means instances $(P, \R^r, \dist)$ defined as above. 
Let $\calO = \{O_1, O_2\}$ be any fixed optimal clustering on $(P, \R^r, \dist)$. 
For each $i \in [r]$, either $e_i \in O_1$ and $-e_i \in O_2$ or $e_i \in O_2$ and $-e_i \in O_1$.
\end{lemma}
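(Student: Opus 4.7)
The plan is to compute the $k$-means cost of an arbitrary $2$-clustering of $P$ explicitly, show that it is minimized only when each antipodal pair $\{e_i,-e_i\}$ is split between $O_1$ and $O_2$, and conclude.

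First, I would use the standard identity $\sum_{p \in C}\|p-\mu(C)\|^2 = \sum_{p \in C}\|p\|^2 - \frac{1}{|C|}\|\sum_{p \in C}p\|^2$. Since every point in $P$ has unit norm, $\sum_{p\in P}\|p\|^2 = 2r$ is a constant, so minimizing cost is equivalent to maximizing $\Phi(\calO) := \frac{\|s_1\|^2}{|O_1|} + \frac{\|s_2\|^2}{|O_2|}$, where $s_j = \sum_{p\in O_j} p$.

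Next I would parameterize the clustering by how it treats each pair $\{e_i,-e_i\}$. Call a pair \emph{split} if its two points land in different clusters, and \emph{lumped} otherwise. Let $a$ be the number of split pairs, $b$ the number lumped into $O_1$, and $c$ the number lumped into $O_2$, with $a+b+c = r$. Lumped pairs contribute $0$ to the corresponding $s_j$ because $e_i + (-e_i) = 0$, while each split pair contributes a unit vector $\pm e_i$ to $s_1$ and its negation to $s_2$; since these vectors lie along distinct coordinate axes, $\|s_1\|^2 = \|s_2\|^2 = a$. Similarly, $|O_1| = a+2b$ and $|O_2| = a+2c$. Substituting,
\begin{equation*}
\Phi(\calO) = a\left(\frac{1}{a+2b} + \frac{1}{a+2c}\right) = \frac{2ar}{(a+2b)(a+2c)},
\end{equation*}
using $(a+2b)+(a+2c) = 2r$.

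The final step is to maximize $\Phi$ subject to $a,b,c \geq 0$ and $a+b+c = r$. With $x = a+2b$ and $y = a+2c$ satisfying $x+y = 2r$, the AM-GM inequality gives $xy \le r^2$, hence $\Phi \le 2a/r \le 2$, with the first inequality tight only when $x = y = r$ and the second only when $a = r$. Both require $a = r$, i.e. $b = c = 0$. The main (and essentially only) thing to verify carefully is that no mixed configuration can also attain the maximum — the chain of inequalities above handles this, since $a = r$ is forced by equality in the second bound. Thus every optimal clustering splits each antipodal pair, which is the claim.
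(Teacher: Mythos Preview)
Your reduction via the centroid identity is clean and leads to exactly the same expression the paper optimizes (their $r_1,r_2,r_0$ are your $a,b,c$). But the final optimization step is broken: the AM--GM inequality goes the wrong way. From $x+y=2r$ you correctly get $xy \le r^2$, but this gives $\dfrac{1}{xy} \ge \dfrac{1}{r^2}$ and hence $\Phi = \dfrac{2ar}{xy} \ge \dfrac{2a}{r}$, a \emph{lower} bound on $\Phi$, not the upper bound you need. So the chain ``$\Phi \le 2a/r \le 2$'' is invalid as written.

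The fix is immediate and does not need AM--GM at all: from $\Phi = \dfrac{a}{a+2b} + \dfrac{a}{a+2c}$ and $b,c \ge 0$ you get each summand $\le 1$, so $\Phi \le 2$, with equality precisely when $b=c=0$, i.e.\ $a=r$. (The paper instead fixes $a$ and argues convexity in $b$ to reduce to an endpoint, reaching the same conclusion with a little more work.) With this correction your argument is complete and, via the centroid identity, arguably tidier than the paper's direct distance computation.
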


\begin{proof}
Consider a bi-partition $A \dot{\cup} B = \{ \pm e_1, \ldots, \pm e_r\}$. 
We first observe the following: Let $i \in [r]$ be such that $-e_i \in A$ and $e_i \in B$.
Define $A':=A \setminus \{-e_i\} \cup \{e_i\}$ and
$B':=B \setminus \{e_i\} \cup \{-e_i\}$.
Then the cost of the solution $(A,B)$ is the same as the cost of $(A',B')$.
This is because $e_i$ and $-e_i$ are at the same distance to all points in 
$P \setminus \{\pm e_i\}$.

Therefore, without loss of generality, we only need to consider 
bi-partitions where the set $A$ can be described as
$ A = \{e_{r_0+1}, \ldots, e_{r_0+r_1}\} \cup 
  \{\pm e_{r_0+r_1+1}, \ldots, \pm e_{r_0+r_1+r_2}\} \,,
$
for some $r_0$, $r_1$ and $r_2$ where $r_0+r_1+r_2=r$.
Consequently, we have that
$
  B = \{\pm e_{1}, \ldots, \pm e_{r_0}\} \cup
  \{-e_{r_0+1}, \ldots, -e_{r_0+r_1}\} \,.
$
To put differently, $A$ contains only the positive points on axes $r_0 + 1, \dots, r_0 + r_1$
whereas $B$ contains only the negative points on the same axes. 
For axes $r_0 + r_1 + 1, \dots, r_0 + r_1 + r_2$, $A$ contains both positive and 
negative points on those axes. 
Similarly, for axes $1, \dots, r_0$, $B$ contains both points on those axes. 
The center of $A$ is
\[
(
\underbrace{0, \ldots, 0}_{r_0}, 
\underbrace{\frac{1}{r_1+2r_2}, \ldots, \frac{1}{r_1+2r_2}}_{r_1},
\underbrace{0, \ldots, 0}_{r_2}
) \,,
\]
and the center of $B$ is
\[
(
\underbrace{0, \ldots, 0}_{r_0}, 
\underbrace{-\frac{1}{r_1+2r_0}, \ldots, -\frac{1}{r_1+2r_0}}_{r_1},
\underbrace{0, \ldots, 0}_{r_2}
) \,.
\]
The cost of set $A$ is
\begin{align*}
&~ r_1 \cdot \left( \left(1-\frac{1}{r_1+2r_2}\right)^2 + (r_1-1) \frac{1}{(r_1+2r_2)^2}\right) 
+ 2r_2 \cdot \left( 1 + r_1 \frac{1}{(r_1+2r_2)^2}\right) \\
=&~ \frac{1}{(r_1+2r_2)^2} \left(
  r_1 (r_1+2r_2-1)^2 + r_1 (r_1-1) + 2r_2 (r_1+2r_2)^2 + 2r_1r_2
  \right) \\
=&~ \frac{1}{(r_1+2r_2)^2} \left(
  r_1 (r_1+2r_2)^2 - 2 r_1(r_1+2r_2) + r_1 + r_1 (r_1-1) + 2r_2 (r_1+2r_2)^2 + 2r_1r_2
  \right) \\
=&~ \frac{1}{(r_1+2r_2)^2} \left( (r_1+2r_2)^3 - r_1 (r_1 + 2 r_2) \right) \\
=&~ (r_1+2r_2) - \frac{r_1}{r_1+2r_2}\,.
\end{align*}
Similarly, the cost of set $B$ is
\begin{align*}
&~ r_1 \cdot \left( \left(1-\frac{1}{r_1+2r_0}\right)^2 + (r_1-1) \frac{1}{(r_1+2r_0)^2}\right) 
+ 2r_0 \cdot \left( 1 + r_1 \frac{1}{(r_1+2r_0)^2}\right) \\
=&~ (r_1+2r_0) - \frac{r_1}{r_1+2r_0}\,.
\end{align*}
The total cost of this clustering is therefore
\[
  2r - \left(\frac{r_1}{r_1+2r_2} + \frac{r_1}{r_1+2r_0}\right)\,.
\]
It remains to show that the maximum of 
\begin{equation*}
  \left(\frac{r_1}{r_1+2r_2} + \frac{r_1}{r_1+2r_0}\right)
  = \left(\frac{r_1}{r_1+2r_2} + \frac{r_1}{2r-2r_2-r_1}\right)
\end{equation*}
is achieved by setting $r_1=r$ and $r_0=r_2=0$. For a fixed $r_1$, the last expression gives
a convex function in variable $r_2$, and therefore either $r_2=0$ or $r_2=r-r_1$ maximizes
the expression. In either case, we have that the maximum of
this expression for a fixed $r_1$ is
\[
1 + \frac{r_1}{2r-r_1} = \frac{2r}{2r-r_1}\,,
\]
which in turn achieves maximum when $r_1=r$.
\end{proof}
 
Now consider the hard example for Euclidean setting with $r = (\log n / 2)$. 
Thus, we have $\log n$ points in $P$. 
Start with an empty set $Q$, and for each subset $P'$ of $P$, add 
$\mu(P') = \txtfrac{(\sum{p \in P'} p)}{|P'|}$ to $Q$.
This forms a finite metric space $k$-means instance $(P, Q, \dist)$ that
has no boundary points.
From the claim on dimension dependency for the Euclidean setting
discussed above, it follows that $\Omega(r) = \Omega(\log n) = \Omega (\log |Q|)$ samples 
are necessary to achieve $(1 - \epsilon)$-accuracy. 
This proves the second claim of Theorem~\ref{thm:lbs}.


\section*{Acknowledgments}
This research was supported by ERC Starting grant 335288-OptApprox.


\bibliography{references}


\appendix

\section{PAC Learning of \texorpdfstring{$\mathbf{k}$}{k}-Means Clustering Instances}
\label{app:pac}

In this section, we introduce some fundamental concepts and tools from 
PAC learning theory, and prove Theorem~\ref{thm:learning-alg-euc} and 
Theorem~\ref{thm:learning-alg-fms}. 
We start by introducing some necessary notations.

Let $\calX$ be an arbitrary domain set, $\calY$ be a label set, and
$\calH \subseteq \{ h:\calX \to \calY \}$ be a hypothesis class.
Let $\calD$ be some arbitrary distribution over $\calX \times \calY$.
The error of a hypothesis $h \in \calH$ with respect to $\calD$ is defined as
\begin{equation}
\loss{\calD}{h} := \Pr_{(x,y) \sim \calD} \left[ h(x) \ne y \right]\,. 
\label{eq:pac-error}
\end{equation}

We say that $\calH$ satisfies \emph{the realizability assumption} if there 
exists some $h^\ast \in \calH$ such that $L_{\calD}(h^\ast) = 0$. 
A learning algorithm $\calA$ for hypothesis class $\calH$ receives as input 
a sequence $S:=\left( (x_1,y_1), \dots, (x_m,y_m) \right)$ of $m$ pairs of 
domain points and their labels, where each domain point is sampled 
independently from $\calD$. 
The learning algorithm should output a predictor $\calA(S) \in \calH$. 
The goal of the algorithm is to minimize the generalization error 
$\loss{\calD}{\calA(S)}$ with respect to the unknown $\calD$. 
We define the empirical error (risk) on $S$ for a hypothesis $h$ as
\begin{equation}
\loss{S}{h} := \frac{\left| \left\{ i \in [m]: h(x_i) \ne y_i \right\}\right|}{m}\,.
\end{equation}
We call the algorithm $\calA$ an \emph{empirical risk minimization (ERM)} 
algorithm if $$ \calA(S) \in \argmin_{h \in \calH} \loss{S}{h}\,.$$

Let $\calH$ be a hypothesis class for some discrete domain $\calX$. 
If the realizability  assumption holds for $\calH$, and if $\calA$ is an 
ERM algorithm for $\calH$, then $L_S(\calA(S)) = 0$. 

Clearly, $\calA(S)$ depends on $S$. 
Hence, the generalization error, 
$L_D(A(S))$, is a random variable whose randomness depends on $S$. 
We are interested in establishing an upper bound on the number of samples that 
$\calA$ needs in order to guarantee that the generalization error is small with 
a good probability.

\begin{definition}[Sample Complexity of a Learning Algorithm]
Let  $\calA$ be a learning algorithm for a hypothesis class $\calH$. 
We define the sample complexity $m_{\calA,\calH}(\epsilon, \delta)$ of $\calA$
as the minimum natural number such that the following holds for all distributions 
$\calD$ over $\calX \times \calY$: If $S$ is sequence of $m \geq 
m_{\calA, \calH}(\epsilon, \delta)$ i.i.d. samples from $\calD$, with 
probability at least $1 - \delta$, $\loss{\calD}{\calA(S)} \leq \epsilon$.
\end{definition}

We now introduce some terminology from learning theory.
We first define the notions of \emph{shattering} and \emph{VC-dimension} for 
a \emph{binary} hypothesis class $\calH \subseteq \{h: \calX \to \{1, 2\}\}$. 

\begin{definition}[Shattering]
\label{def:shat}
We say a binary hypothesis class $\calH$ shatters a finite set  
$${C=\left\{ c_1,\dots,c_m \right\} \subseteq \calX},$$ if 
$\left\{ (h(c_1),\dots,h(c_m)): h \in \calH \right\} = \left\{ 1, 2 \right\}^{|C|}\,$.
\end{definition}

\begin{definition}[Vapnik–Chervonenkis (VC) Dimension]
The VC-dimension of a binary hypothesis class $\calH$, denoted $\vcdim(\calH)$,
is the maximum size of a set $C \subset \calX$ that can be shattered by 
$\calH$.
If $\calH$ shatters sets of arbitrarily large size, we say that $\calH$ has 
infinite VC-dimension.
\end{definition}

By definition, to shatter a set $C$, a hypothesis class $\calH$ must have at 
least $2^{|C|}$ distinct elements. 
Hence, we have the following lemma.

\begin{lemma}
\label{lem:vc-dim-finite}
Let $\calH$ be a finite hypothesis class. 
Then, $\vcdim(\calH) = O(\log |\calH|)$.
\end{lemma}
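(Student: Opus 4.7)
The plan is to exploit the counting observation that the paper itself foreshadows in the sentence immediately preceding the lemma: in order to shatter a set $C$, the hypothesis class must produce every one of the $2^{|C|}$ possible binary labelings of $C$, and each such labeling must come from a distinct hypothesis.

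Concretely, I would proceed as follows. Let $d = \vcdim(\calH)$ and, if $d$ is finite, fix a set $C \subseteq \calX$ with $|C| = d$ that is shattered by $\calH$. By Definition~\ref{def:shat}, the restriction map $\calH \to \{1,2\}^{C}$, $h \mapsto (h(c_1), \ldots, h(c_d))$, is surjective. Since the codomain has cardinality $2^d$, this forces $|\calH| \geq 2^d$, and rearranging gives $d \leq \log_2 |\calH|$, i.e. $\vcdim(\calH) = O(\log |\calH|)$. The case $d = 0$ is trivial, and the hypothesis that $\calH$ is finite rules out the possibility of infinite VC-dimension, so $d$ is well-defined in $\N$.

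There is no real obstacle here; the argument is a one-line pigeonhole-style counting step, and the only thing to be careful about is the quantifier structure in Definition~\ref{def:shat} (one needs the forward direction: shattering implies a surjection onto $\{1,2\}^C$, which then implies the lower bound on $|\calH|$). The use of $\log_2$ versus natural logarithm is absorbed into the $O(\cdot)$ notation, so no further bookkeeping is needed.
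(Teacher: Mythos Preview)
Your proposal is correct and follows exactly the approach the paper indicates: the paper does not give a separate proof but simply notes in the preceding sentence that shattering a set $C$ requires at least $2^{|C|}$ distinct hypotheses, which is precisely the surjection-of-the-restriction-map argument you spell out.
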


For a $k$-ary hypothesis class $\calH \subseteq \{h: \calX \to [k]\}$, 
we define the analogous notions of \emph{multiclass shattering} and 
\emph{Natarajan dimension}. 
As we will see below, the notion of multiclass shattering is a generalization 
of the notion of shattering for the binary hypothesis classes. 
Consequently, the Natarajan dimension is a generalization of the VC-dimension.

\begin{definition}[Multiclass Shattering]
\label{def:multshat}
We say that a set $C \subseteq \calX$ is shattered by a $k$-ary hypothesis 
class $\calH$ if there exist two functions $f_0, f_1: C \rightarrow [k]$ 
such that,
\begin{itemize}
\item For every $x\in C$, $f_0(x) \neq f_1(x)$.
\item For every $B \subset C$, there exists a function $h \in \calH$ such that,
$$\forall x \in B, h(x)=f_0(x) \mbox{ and } \forall x \in C \backslash B, h(x)=f_1(x).$$
\end{itemize}
\end{definition}

\begin{definition}[Natarajan Dimension]
The Natarajan dimension of a $k$-ary hypothesis class $\calH$, denoted by 
$\Ndim(\calH)$, is the maximal size of a set $C \subseteq \calX$ that can 
be shattered by $\calH$.
\end{definition}

Suppose $k = 2$. If a binary hypothesis class $\calH_{\bin}: \calX \to \{1, 2\}$ 
shatters a set $C \subseteq \calX$ according to Definition~\ref{def:shat},  
then for all subsets $B$ of $C$, there exists a hypothesis $h \in \calH_{\bin}$ 
that maps all points in $B$ to $1$ and all points in $C \backslash B$ to $2$. 
Thus, setting $f_0(x) = 1$ and $f_1(x) = 2$ to be constant functions, it 
follows that $\calH_{\bin}$ shatters $C$ according to Definition~\ref{def:multshat} as well.

For the  converse,  suppose that $\calH_{\bin}$ is a hypothesis class that 
shatters a set $C \subseteq \calX $ according to Definition~\ref{def:multshat}. 
Definition~\ref{def:multshat} essentially says that, any mapping 
$C \to \{1 , 2 \}^{|C|}$ must be achievable with one of the hypotheses in 
$\calH_{\bin}$ which is equivalent to Definition~\ref{def:shat}.  
To see this, notice that $f_0(x) \neq f_1(x)$, and that we only have two labels.

In the remainder of this appendix, we prove Theorem~\ref{thm:learning-alg-euc} 
and Theorem~\ref{thm:learning-alg-fms}. 
We proceed with defining a new hypothesis class for binary classification which 
is associated with non-homogeneous halfspaces in $\R^r$. 
Formally, given a non-homogeneous hyperplane $\ell \subset \R^r$, let 
$h_\ell$ be the binary classifier that assigns labels $\pm1$ to points based on 
which side of the hyperplane the point lies on. 
We define the binary hypothesis class $\calH^{\euc}_{\bb}$ as
$$\calH^{\euc}_{\bb} = \{h_\ell : \ell \mbox{ is a (non-homogeneous) hyperplane in } \R^r\}.$$
 
\begin{lemma}[Theorem 9.3 of Shalev-Shwartz and Ben-David~\cite{SB14}] 
\label{prop:vcdim-of-halfspaces} 
The VC-dimension of $\calH^{\euc}_{\bb}$ is $r + 1$.
\end{lemma}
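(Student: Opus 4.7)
The plan is to prove the two usual matching bounds: a lower bound by exhibiting $r+1$ points that can be shattered, and an upper bound by showing that no set of $r+2$ points can be shattered. Both parts will reduce to standard facts about affine dependence in $\R^r$.

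For the lower bound, I would consider the $r+1$ points $\{0, e_1, \dots, e_r\}$, where $e_1, \ldots, e_r$ denotes the standard basis of $\R^r$. These points are affinely independent. For any subset $B$ of these points, I need to produce a non-homogeneous hyperplane $\ell = \{x : \langle w, x \rangle = b\}$ such that $h_\ell$ assigns label $+1$ to the points in $B$ and $-1$ to the rest. I would do this directly: for each $e_i$, choose the sign of the $i$-th coordinate of $w$ to be $+1$ if $e_i \in B$ and $-1$ otherwise, and then pick the bias $b$ (depending on whether $0 \in B$) small enough in absolute value so that $h_\ell(0)$ is also correct. This explicit construction shows $\vcdim(\calH^{\euc}_{\bb}) \geq r+1$.

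For the upper bound, the key tool is Radon's theorem, which states that any set of $r+2$ points in $\R^r$ admits a partition into two disjoint subsets $A$ and $B$ whose convex hulls intersect at some common point $z$. Given such a partition, no affine hyperplane can strictly separate $A$ from $B$: any halfspace containing all of $A$ would, by convexity, contain $z$, and likewise for $B$, forcing a contradiction with the strict separation needed to realize the dichotomy $(A,B)$. Consequently, the dichotomy that labels $A$ as $+1$ and $B$ as $-1$ is not realizable by any $h_\ell \in \calH^{\euc}_{\bb}$, so no set of $r+2$ points is shattered.

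The only mild obstacle is the handling of points that lie on the hyperplane itself (the boundary case for $h_\ell$); this is a matter of convention, and whichever convention is fixed, the lower-bound construction can be adjusted by perturbing the bias $b$ by an arbitrarily small amount so that none of the target points land exactly on $\ell$. Putting the two bounds together yields $\vcdim(\calH^{\euc}_{\bb}) = r+1$.
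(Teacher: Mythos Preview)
Your argument is correct and is the standard proof of this fact. The paper itself does not give a proof of this lemma; it simply cites Theorem~9.3 of Shalev-Shwartz and Ben-David, whose proof proceeds exactly as you describe: the lower bound via an explicit set of $r+1$ affinely independent points, and the upper bound via Radon's theorem. So there is nothing to compare --- you have reproduced the argument the paper defers to.
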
 

Let $(P, \R^r, \dist)$ be a $k$-means instance in Euclidean space. 
Let $\calO$ be a fixed optimal clustering of $P$, and let $f_\calO$ be the 
labeling function of $\calO$. 
Assume that $(P, \R^r, \dist)$ has no boundary points (i.e., in an optimal 
clustering, for any given point $p \in P$, the closest optimal center is 
unique). 
Let $A, B \subset P$ two non-empty disjoint subsets of points of two 
different labels under $f_\calO$. 
Due to the assumption of having no boundary points, there exists a hypothesis 
$h \in \calH^{\euc}_{\bb}$ that perfectly separates points in $A$ from those in $B$.
That is to say, it assigns $-1$ to all points in $A$ and $+1$ to all points in 
$B$. Such a hypothesis $h$ can be learned in polynomial-time using standard
SVM (Support Vector Machine) training algorithms like the one proposed by Joachims~\cite{Joa99}.

Let $S = \{ (x_1, y_1), \dots , (x_m, y_m) \}$ be a multiset of sample-label 
pairs from $P \times Q$. 
Fix a pair of distinct cluster labels $a, b \in [k']$ and let $S_{a, b}$ be 
the subset of the samples-label pairs in $S$ whose labels are either $a$ or $b$. 
Since each pair of original clusters can be separated with a hyperplane in 
$\R^r$, 
there exists a hypothesis $g_{a,b} \in \calH^{\euc}_{b}$ that perfectly labels the 
samples in $S_{a, b}$; that is to say that for all $(x, y) \in S_{a, b}$, 
$g_{a,b}(x) = 1$ if $y = a$ and $g_{a,b}(x) = -1$ if $y = b$. 
Suppose for each pair $a, b \in [k]$, $a \ne b$, we find a hypothesis $g_{a,b}$.
Then, for all samples $(x, a) \in S$, $\sum_{b \neq a} g_{a, b}(x) = k - 1$. 
Moreover, for any sample $(x, a) \in S$, and for any $a' \neq a$, 
$\sum_{b \neq a'} g_{a', b}(x) < k - 1$. 
This is because $g_{a', a}(x) = -1$. 
Consequently, the function $g$ defined as 
$$ g(x) = \argmax_{a \in [k]} \sum_{b \in [k], b \neq a} g_{a, b}(x)$$ 
returns the correct label $y$ for all $(x, y) \in S$.
Hence, the aforementioned procedure gives an ERM algorithm for the hypothesis 
class $\calH^\ast$ which is formally defined follows:
for 
$\bar{g} = (g_{a, b})_{[a, b \in [k], a \neq b]} \in \left(\calH^{\euc}_{\bb}\right)^{k(k-1)}$, 
a $(k(k-1))$-tuple of binary hypotheses, let 
$h_{\bar{g}}(x) = \argmax_{a \in [k]} \sum_{b \in [k], b \neq a} g_{a, b}(x).$
Define
$$ \calH^\ast = \left\{ h_{\bar{g}} : \bar{g} \in \left(\calH^{\euc}_{\bb}\right)^{k(k-1)}\right\}. $$
 
The pseudo-code of the algorithm is presented in Algorithm~\ref{alg:mult-from-bin}. 
This algorithm is an adaptation of the All-Pairs algorithm described in 
Chapter 17.1 of Shalev-Shwartz and Ben-David~\cite{SB14}. 
An observant reader may notice that learning two hypotheses, namely $g_{a,b}$ 
and $g_{b,a}$, per each distinct $a, b$ is redundant as one can set 
$g_{a, b}(x) = - g_{b, a}(x)$. 
However, for the ease of presentation, we stick with the idea that $g_{a,b}$ 
and $g_{b,a}$ are picked independently from each other.

\begin{algorithm}
  \SetKwInOut{Input}{Input}\SetKwInOut{Output}{Output}
  \Input{A set of samples $S = (x_1, y_1), \dots, (x_m, y_m)$ and an 
  ERM algorithm $\calB$ for learning $\calH^{\euc}_{\bb}$.}
  \Output{A hypothesis $g \in \calH^\ast$.}
  \BlankLine
  \For{$a, b \in [k]$ such that $a \neq b$}{
     Let $S_{a,b} = [\, ]$ be an empty list.\\
     \For{$t = 1, \dots, m$}{
        If $y_t= a$ add $(x_t, 1)$ to $S_{a,b}$. \\
        If $y_t = b$ add $(x_t, -1)$ to $S_{a,b}$. 
     }
     Let $g_{a,b} = \calB(S_{a,b})$.
  }
  Let 
  $g(x) = \argmax _{a \in [k] } \left( \sum_{b \in [k], b \neq a} g_{a, b}(x) \right)$.
  \label{alg:mult-from-bin}
  \caption{An ERM algorithm for $k$-category classification.}
\end{algorithm}

The corollary to the following lemma bounds the Natarajan dimension of 
$\calH^\ast$.

\begin{lemma}[Lemma 29.6 of Shalev-Shwartz and Ben-David~\cite{SB14}]
\label{lem:nat-dim-of-mult-class}
Consider a multiclass predictor derived in the following way. 
Train $l$ binary classifiers from a binary hypothesis class $\calH_{\bin}$ 
and let $v:\{-1,1\}^l \rightarrow [k]$ be a mapping from the $l$ predictor 
results to a class label. 
Let $\calH_{\bin}^v$ be the class of multiclass predictors obtained in this 
manner. 
If $\vcdim(\calH_{\bin}) = d$ then, $\Ndim(\calH_{\bin}^v) \leq 3 d l \log (ld)$.
\end{lemma}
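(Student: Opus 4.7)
The plan is to combine Sauer's Lemma with a two-sided counting argument comparing the number of distinct multiclass labelings that $\calH_{\bin}^v$ can produce on a shattered set against the $2^m$ labelings demanded by Natarajan shattering.

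First I would assume that $\calH_{\bin}^v$ shatters some set $C = \{x_1, \dots, x_m\} \subseteq \calX$ in the multiclass sense. By Definition~\ref{def:multshat}, there exist $f_0, f_1: C \to [k]$ with $f_0(x) \neq f_1(x)$ for every $x \in C$, and for each of the $2^m$ subsets $B \subseteq C$ there is a hypothesis $h_B \in \calH_{\bin}^v$ with $h_B|_B = f_0|_B$ and $h_B|_{C \setminus B} = f_1|_{C \setminus B}$. Since $f_0$ and $f_1$ disagree pointwise on $C$, the restrictions $h_B|_C$ are pairwise distinct as $B$ ranges over $2^C$. Thus $\calH_{\bin}^v$ induces at least $2^m$ distinct labelings of $C$.

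Next I would bound the same quantity from above through the binary components. Every $h \in \calH_{\bin}^v$ has the form $h(x) = v(g_1(x), \dots, g_l(x))$ for some $(g_1, \dots, g_l) \in \calH_{\bin}^l$, so $h|_C$ is determined by the tuple of restrictions $(g_1|_C, \dots, g_l|_C)$. By Sauer's Lemma applied to $\calH_{\bin}$, which has $\vcdim(\calH_{\bin}) = d$, the number of distinct restrictions of $\calH_{\bin}$ to $C$ is at most $(em/d)^d$ whenever $m \geq d$. Since the $l$ components can be chosen independently and $v$ is a deterministic mapping, the number of multiclass restrictions $h|_C$ is at most $(em/d)^{dl}$.

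Combining the two bounds yields $2^m \leq (em/d)^{dl}$, i.e., $m \leq dl \log_2(em/d)$. The concluding step is the routine verification that any $m$ satisfying this inequality must also satisfy $m \leq 3dl \log(ld)$: one substitutes $m = 3dl \log(ld)$ and derives a contradiction by comparing $3\log(ld)$ with $\log_2(3e\log(ld))$ for $ld$ sufficiently large, and handles the small-$m$ case separately by noting $m \leq d \leq 3dl\log(ld)$. I expect the main obstacle to be purely the bookkeeping needed to extract the precise constant $3$ in front of $dl \log(ld)$; the conceptual content is entirely captured by Sauer's Lemma plus the observation that the voting map $v$ can only \emph{reduce} the number of achievable labelings relative to the $l$-tuple of binary predictions.
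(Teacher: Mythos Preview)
The paper does not supply its own proof of this lemma; it simply cites it as Lemma~29.6 of Shalev-Shwartz and Ben-David~\cite{SB14} and uses it as a black box. So there is no ``paper's proof'' to compare against---your proposal is effectively providing the textbook argument that the paper omits.

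Your approach is correct and is the standard one: lower-bound the number of distinct restrictions to a Natarajan-shattered set $C$ by $2^m$, upper-bound it via Sauer's Lemma applied componentwise by $(em/d)^{dl}$, and solve the resulting inequality $m \leq dl \log_2(em/d)$ for $m$. The only substantive step you have left implicit is the last one, and your description of it is slightly off. Writing $x = m/(dl)$, the inequality becomes $x \leq \log_2(elx)$, which after the usual ``$\log_2 x \leq x/2$ for $x \geq 4$'' trick yields $x \leq 2\log_2(el)$; since $2\log_2(el) \leq 3\log(ld)$ once $ld$ is at least a small absolute constant, the claimed bound follows. Your plan to substitute $m = 3dl\log(ld)$ and derive a contradiction is equivalent, but be aware that the edge cases (small $d$, small $l$) require a bit more care than you indicate---the inequality $m \leq d$ alone does not immediately give $m \leq 3dl\log(ld)$ when $l = d = 1$, for instance, though such degenerate cases are easily handled by direct inspection. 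None of this is a conceptual gap; as you say, it is pure bookkeeping for the constant.
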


\begin{corollary}
\label{cor:nat-dim-k2-hyp}
The Natarajan dimension of the hypothesis class $\calH^\ast$ is 
$O(rk^2 \log(rk))$ where $r$ is the dimension of the Euclidean space and 
$k$ is the number of clusters in the $k$-means instance.
\end{corollary}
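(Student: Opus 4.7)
The plan is to apply Lemma~\ref{lem:nat-dim-of-mult-class} directly. The hypothesis class $\calH^\ast$ is exactly of the form handled by that lemma: each predictor $h_{\bar{g}} \in \calH^\ast$ is obtained by training $l = k(k-1)$ binary classifiers $g_{a,b}$ drawn from the binary class $\calH_{\bin} = \calH^{\euc}_{\bb}$, and then aggregating their outputs through the fixed mapping
\[
v: \{-1,1\}^{k(k-1)} \to [k], \qquad v\bigl((g_{a,b}(x))_{a \ne b}\bigr) = \argmax_{a \in [k]} \sum_{b \ne a} g_{a,b}(x).
\]
The mapping $v$ does not depend on the training data, only on the collection of binary outputs, which is precisely the setting of Lemma~\ref{lem:nat-dim-of-mult-class}.

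First I would invoke Lemma~\ref{prop:vcdim-of-halfspaces} to record that $d := \vcdim(\calH^{\euc}_{\bb}) = r+1$. Next I would set $l = k(k-1)$, which is the number of pairwise binary classifiers that Algorithm~\ref{alg:mult-from-bin} trains. Plugging these values into the bound $\Ndim(\calH_{\bin}^v) \le 3 d l \log(ld)$ from Lemma~\ref{lem:nat-dim-of-mult-class} gives
\[
\Ndim(\calH^\ast) \;\le\; 3 (r+1)\, k(k-1)\, \log\bigl(k(k-1)(r+1)\bigr) \;=\; O\bigl(r k^{2} \log(rk)\bigr),
\]
which is the claimed bound.

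There is essentially no obstacle: the corollary is a mechanical specialization of Lemma~\ref{lem:nat-dim-of-mult-class} once we have identified $(d, l, v)$ for $\calH^\ast$. The only thing to be careful about is the redundancy noted in the paragraph preceding Algorithm~\ref{alg:mult-from-bin} (one could use $k(k-1)/2$ classifiers instead of $k(k-1)$ by exploiting $g_{a,b} = -g_{b,a}$); this would only improve the constant and does not affect the asymptotic bound, so I would ignore it and use $l = k(k-1)$ as in the algorithm statement for consistency.
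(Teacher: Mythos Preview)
Your proposal is correct and matches the paper's approach exactly: the paper presents this as an immediate corollary of Lemma~\ref{lem:nat-dim-of-mult-class}, obtained by plugging in $d = r+1$ from Lemma~\ref{prop:vcdim-of-halfspaces} and $l = k(k-1)$.
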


The following lemma, combined with Corollary~\ref{cor:nat-dim-k2-hyp} 
yields that 
$$m_{\erm, \calH^\ast}(\epsilon, \delta) = O \left(\frac{k^2r \log (k^2 r) 
\left( \log \left(\frac{k^3 r}{\epsilon}\right) \right) + 
\log \left( \frac{1}{\delta} \right)} {\epsilon} \right).$$

\begin{lemma}[Theorem 3.7 of Daniely et al.~\cite{DSBS15}]
\label{lem:m-erm-ndim}
Let $\calH \subseteq \{h: \calX \to [k]\}$ be a hypothesis class, for which the 
realizability assumption holds with respect to some distribution $\calD$, and 
let $d = \Ndim(\calH)$. 
Let $$m_{\erm, \calH}(\epsilon, \delta) = \sup_{\calA \in \erm} 
m_{\calA, \calH}(\epsilon, \delta),$$ where the supremum is taken over 
all ERM algorithms for $\calH$. 
Then, 
$$m_{\erm, \calH}(\epsilon, \delta) = O \left(\frac{d \left( \log 
\left(\frac{k d}{\epsilon}\right) \right) + 
\log \left( \frac{1}{\delta} \right)} {\epsilon} \right)\,.$$
\end{lemma}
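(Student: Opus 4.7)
The plan is to prove this by the standard two-step recipe of statistical learning theory: first establish a multiclass Sauer--Shelah type bound on the growth function of $\calH$ in terms of $\Ndim(\calH) = d$, and then combine that bound with a ghost-sample symmetrization argument tailored to the realizable case.

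First I would prove the multiclass growth-function bound: for any finite $A \subseteq \calX$ with $|A| = m$, the projection $\calH|_A = \{(h(x))_{x \in A} : h \in \calH\}$ has cardinality at most $\binom{m}{\leq d} \binom{k}{2}^{d} \leq (emk^2/d)^d$. This generalizes the classical Sauer--Shelah lemma: one runs the usual shifting/reduction (Pajor-style) induction on $m$, but at each shattered coordinate one must account for a choice of witness pair $(f_0(x), f_1(x)) \in [k]^2$ from Definition~\ref{def:multshat}, which contributes the $\binom{k}{2}^d$ factor. This is the main technical obstacle, because the combinatorial bookkeeping of the witness pairs throughout the induction is more delicate than in the binary setting; the rest of the argument is essentially routine.

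Next, fix any ERM algorithm $\calA$ and let $S \sim \calD^m$ with $\calD$ realized by some $h^\ast \in \calH$, so that $L_S(\calA(S)) = 0$. It suffices to bound $\Pr_{S}[\exists h \in \calH : L_S(h) = 0 \text{ and } L_\calD(h) > \epsilon]$. Applying the classical ghost-sample symmetrization (valid whenever $\epsilon m \geq 2$), this is at most
\[
2 \Pr_{S, S' \sim \calD^m}\bigl[\exists h \in \calH : L_S(h) = 0 \text{ and } L_{S'}(h) > \epsilon/2\bigr].
\]

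Finally, condition on the multiset $S \cup S'$ of size $2m$ and let $\pi$ be a uniformly random partition of it into $S$ and $S'$ of size $m$ each. By the growth-function bound from the first step, $\calH$ induces at most $\tau_\calH(2m) \leq (2emk^2/d)^d$ distinct labelings on $S \cup S'$. For any fixed such labeling whose number $t$ of disagreements with the true labels on $S \cup S'$ is at least $\epsilon m / 2$ (necessary whenever $L_S(h)=0$ and $L_{S'}(h) > \epsilon/2$), the probability over $\pi$ that all $t$ disagreements land in $S'$ is $\binom{2m - t}{m}/\binom{2m}{m} \leq 2^{-t} \leq 2^{-\epsilon m / 2}$. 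A union bound over the growth function yields an overall failure probability at most $2 \cdot (2emk^2/d)^d \cdot 2^{-\epsilon m / 2}$; requiring this to be at most $\delta$ and solving for $m$ gives the claimed $m = O\bigl((d \log(kd/\epsilon) + \log(1/\delta))/\epsilon\bigr)$, which holds uniformly over all ERM algorithms since the argument only used $L_S(\calA(S))=0$.
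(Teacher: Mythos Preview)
The paper does not prove this lemma at all; it is quoted verbatim as Theorem~3.7 of Daniely et al.~\cite{DSBS15} and used as a black box. Your proposal, by contrast, supplies an actual proof, and it is the standard one: Natarajan's multiclass Sauer--Shelah bound $|\calH|_A| \le \binom{m}{\le d}\binom{k}{2}^d$ combined with the Vapnik--Chervonenkis double-sample symmetrization for the realizable case. The argument is correct as sketched. The only step you glossed over is the self-referential nature of the final inequality (the unknown $m$ appears inside the $\log$ on the right-hand side through $\tau_\calH(2m)$); resolving it by substituting the candidate $m = \Theta\bigl((d\log(kd/\epsilon)+\log(1/\delta))/\epsilon\bigr)$ and checking consistency is routine, but worth stating explicitly since otherwise the bound reads as $O\bigl((d\log(mk^2/d)+\log(1/\delta))/\epsilon\bigr)$ rather than the claimed form.
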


Thus, if $m \geq m_{\erm, \calH^\ast}(\epsilon, \delta)$ i.i.d. samples from 
$Q$ and their respective labels were input to Algorithm~\ref{alg:mult-from-bin}, 
the output hypothesis $h$, with probability at least $(1 - \delta)$, will have 
at most an $\epsilon$ error.

To prove Theorem~\ref{thm:learning-alg-euc}, what remains is to ensure that 
the learned hypothesis does not output class labels that the learning algorithm 
has not seen in the samples. 
For this, we propose a simple modification to Algorithm~\ref{alg:mult-from-bin}. 
That is, if $a'$ is a class label that is not present in the samples, for all 
class labels $a$ from which we have seen at least $1$ sample, we set 
$g_{a,a'}(x) = 1$ and  $g_{a', a}(x) = -1$ to be constant functions which may 
correspond to hyperplanes that are infinitely far away from the origin. 
Without loss of generality, assume that we have seen samples with labels 
$1, \dots, k'$, and we have not seen samples with labels $k' + 1, \dots, k$. 
Let $a \in k'$ be  a label that that we have seen and let 
$a' \in [k] \backslash [k']$ be a label that we have not seen.  
Therefore we have 
$$\sum_{b \neq a}g_{a,b}(x) = \sum_{b \in [k'], b \neq a} g_{a,b}(x) + 
\sum_{b \in [k] \backslash [k']}g_{a, b}(x) \geq  - (k' - 1) +  
(k - k')   = k - 2k' + 1,$$ and 
$$ \sum_{b \neq a'}g_{a',b}(x) = \sum_{b \in [k']} g_{a',b}(x) + 
\sum_{b \in [k] \backslash [k'], b \neq a'}g_{a', b}(x) \leq -k' +  
(k - k' - 1)   = k - 2k' - 1.$$ 
Consequently, the output $g(x)$ of Algorithm~\ref{alg:mult-from-bin} 
will never assign a label from $[k] \backslash [k']$ to any $x \in \R^d$.

To prove Theorem~\ref{thm:learning-alg-fms}, all we need to do is to 
replace the binary hypothesis class $\calH^{\euc}_{\bb}$ used in the preceding 
discussion with the binary hypothesis class $\calH^{\fms}_{\bb}$ which we introduce 
next. We show that $\calH^{\fms}_{\bb}$ satisfies the realizability assumption. 
We further show that there exists an $\erm$ algorithm for  $\calH^{\fms}_{\bb}$ and
the VC-dimension of $\calH^{\fms}_{\bb}$ is small. 

Let $(P, Q, \dist)$ be $k$-means instance in a finite metric space where  
$|Q| < \infty$, and define the binary hypothesis class $\calH^{\fms}_{\bb}$ 
as follows:
$$\calH^{\fms}_{\bb} = \{h_{q_1, q_2} : (q_1, q_2) \in Q \times Q  \}, $$ where 
$h_{q_1, q_2} : P \to \{-1, +1\}$ is a binary labeling function define as
$$ h_{q_1, q_2}(p) = 
  \begin{cases}
    -1, & \text{if } \dist(p, q_1) \geq \dist(p, q_2)  \\
  +1, & \text{otherwise. } 
  \end{cases}$$
Observe that this is a finite hypothesis class with at most $|Q|^2$ 
hypotheses. 
Furthermore, based on the assumption on no boundary points, for any two 
subsets $P$ of two different labels, there exists an $h \in \calH^{\fms}_{\bb}$ 
that perfectly separates them. 
Hence, pairwise binary classification of any two labels is realizable under 
$\calH_{\bb}$. 
To find such $h$ in polynomial-time, we simply iterate over all 
possible hypotheses and pick the one that gives zero error.
Lemma~\ref{lem:vc-dim-finite} yields that the VC-dimension of $\calH^{\fms}_{\bb}$ 
is $O(\log |Q|)$.

\end{document}